\documentclass[dvipsnames, 11pt]{article}
\usepackage{jheppub}

\usepackage[utf8]{inputenc}

\usepackage{amsmath}
\usepackage{amssymb}
\usepackage[all]{xy}
\usepackage{tikz}
\usetikzlibrary{calc}

\usepackage[toc,page]{appendix}
\usepackage{enumerate}
\usepackage{tensor}
\usepackage{booktabs}
\usepackage{bbm}
\usepackage{youngtab}
\usepackage{slashed}
\usepackage{multirow}
\usepackage{makecell}
\usepackage{float}
\usepackage{comment}
\usepackage{verbatim}
\usepackage{xcolor}
\usepackage[framemethod=tikz]{mdframed}
\usepackage{soul}
\usepackage{mathtools}
\usepackage[inline]{enumitem}
\usepackage{alphabeta}
\usepackage{accents}
\usepackage{graphicx}
\usepackage{pgfplots}
\usepgfplotslibrary{fillbetween}
\usetikzlibrary{decorations.markings}

\usepackage{amsthm}

\newtheorem{lemma}{Lemma}[section]
\newtheorem{corollary}{Corollary}[lemma]

\newcommand{\ab}{|}

\newcommand{\de}{\mathrm{d}}

\newcommand{\yr}{y_{\mathrm{r}}}
\newcommand{\yi}{y_{\mathrm{i}}}
\newcommand{\yrd}{\dot{y}_{\mathrm{r}}}
\newcommand{\yid}{\dot{y}_{\mathrm{i}}}

\newcommand{\m}{m}
\newcommand{\f}{f}

\newcommand{\In}{\mathrm{b}}
\newcommand{\fin}{0}
\newcommand{\M}{\mathrm{m}}
\newcommand{\rad}{\mathrm{r}}

\newcommand{\rec}{\mathrm{rec}}

\newcommand{\e}{\mathrm{e}}
\newcommand{\I}{\mathrm{i}}

\newcommand{\p}{\mathrm{P}}

\allowdisplaybreaks

\usepackage{hyperref}
\hypersetup{
    colorlinks=true,
    linkcolor=orange!65!purple,
    citecolor=orange!65!purple,
    filecolor=orange!65!purple,   
    urlcolor=orange!65!purple,
}

\title{Bounding axion dark energy}

\author[a]{Gary Shiu,}
\author[b]{Flavio Tonioni,}
\author[c]{Hung V. Tran}

\affiliation[a]{Department of Physics, University of Wisconsin-Madison, 1150 University Avenue, Madison, WI 53706, USA}
\affiliation[b]{Dipartimento di Fisica e Astronomia, Università degli Studi di Padova, and INFN-Padova, via F. Marzolo 8, 35131 Padua, Italy}
\affiliation[c]{Department of Mathematics, University of Wisconsin-Madison, 480
Lincoln Drive, Madison, WI 53706, USA}

\emailAdd{shiu@physics.wisc.edu}
\emailAdd{flavio.tonioni@unipd.it}
\emailAdd{hung@math.wisc.edu}

\abstract{
We study cosmological solutions of (pseudo)scalar theories with periodic potentials, in the presence of arbitrary cosmological fluids -- including a cosmological constant of either sign.
Independently of the initial misalignment angle and field velocity, we derive an analytic bound that the axion mass parameter and decay constant fulfill as the universe decreases its acceleration rate, finding a natural application in models of thawing quintessence.
As a first application, we illustrate the analytic handle our bound provides in bounding axion dark energy, after observational inputs from DESI and various supernovae data sets are taken into account.
As a second application, we argue that our analytic bounds in combination with proposed quantum gravity constraints on axions exclude vast regions of parameter space.
The combined constraints push the axion masses to be much larger than the Hubble scale, in tension with basic models of axion quintessence.
}

\begin{document}

\maketitle

\newpage
\pagenumbering{arabic}

\section{Introduction}

Since the discovery of the accelerated expansion
of our universe \cite{SupernovaCosmologyProject:1998vns, SupernovaSearchTeam:1998fmf}, fundamental 
physics has been confronted with the arduous task of providing a microscopic explanation of the dark energy (DE) that accounts for this phenomenon. 
Recently, the Dark Energy Spectroscopic Instrument (DESI) has raised the complexity to an even higher level by showing a tantalizing hint of a time-evolving dark energy \cite{DESI:2024mwx, DESI:2025zpo, DESI:2025zgx}.
Although the possibility of quintessence has been investigated for a long time \cite{Ratra:1987rm, Wetterich:1987fm, Caldwell:1997ii}, the question of finding a phenomenological model capable of reproducing 
observations is now more pressing than ever.

On the theoretical side, there has been a rising consensus that the parameters of effective field theories (EFTs) coupled to gravity are not freely chosen but are subject to constraints that originate in the ultraviolet (UV) completion of the theory (see refs.~\cite{Palti:2019pca, Agmon:2022thq, Montero:2024qml} for reviews).
Interestingly, one of the most consequential, albeit more conjectural, implications of these quantum gravity constraints is the absence of shallow positive potentials \cite{Obied:2018sgi, Ooguri:2018wrx, Bedroya:2019snp}.
This insight has led to intense activity towards quintessence model-building even prior to the DESI data release (DR) \cite{Danielsson:2018ztv, Agrawal:2018own, Heisenberg:2018yae, Cicoli:2018kdo, Raveri:2018ddi}.
Regardless of whether this conjectural property of scalar potentials in quantum gravity ultimately holds --  and how broadly it applies beyond the asymptotic regimes of field space -- a robust and quite generic prediction of string theory is the ubiquitous presence of axions (or axion-like particles) \cite{Svrcek:2006yi} (see ref.~\cite{Marchesano:2024gul} for a review of the role of axions in realistic string constructions).
The ubiquity of axions is coined the \emph{axiverse} \cite{Arvanitaki:2009fg}.
For this reason, axions have been a prime candidate to realize dynamical DE in string theory.
For reviews on string-theoretic attempts to realize DE, see e.g. refs.~\cite{Cicoli:2023opf, Andriot:2026lac}.

Axions have emerged as a prime candidate for quintessence from a string-theoretic perspective not only because of their ubiquity, but also because they naturally overcome several hurdles in quintessence model building.
The properties of axions that make them suitable for realizing dark energy
include the following.
\begin{enumerate*}[label=(\alph*)]
    \item The axion shift symmetries protect their potentials
    from radiative corrections.
    \item The axion potential is generated only non-perturbatively, making it plausible for their dynamics to emerge at exponentially small energy scales, while other fields can be stabilized perturbatively at hierarchically larger masses.
    \item The axion derivative couplings with other fields greatly relax the experimental lower bounds on their masses.
\end{enumerate*}
However, other problems remain.
Among others, to provide a sustained epoch of cosmic acceleration, the axion decay constant would have to be super-Planckian, much larger than is considered possible in consistent theories of quantum gravity \cite{Rudelius:2014wla, Rudelius:2015xta, Brown:2015iha, Brown:2015lia,Heidenreich:2015wga}.
As a result, the initial conditions of the axion would have to be extremely fine-tuned for cosmic acceleration to happen in the current epoch.

Motivated by the phenomenological context described above, in this paper we study the cosmological evolution of (pseudo)scalars with periodic potentials.
We are also able to account for potentials that have global minima of negative energy, and not just a vanishing one.
A DE model that eventually settles into an anti-de Sitter (AdS) vacuum would face less stringent theoretical constraints from its UV completion \cite{Danielsson:2018ztv}.
In ref.~\cite{Luu:2025fgw}, this scenario was also argued to provide a better fit with observational data.

The dynamics of axions coupled to Friedmann–Lema\^{i}tre–Robertson–Walker (FLRW) spacetime and other cosmological fluids is in general quite complicated.
Except for certain initial conditions (e.g., zero initial velocity, and/or misalignment angle near the hilltop) which allow certain approximations to be made, one has to resort to numerical solutions. 
While numerical solutions are sometimes essential for making contact with data, they obscure the connection to theory. 
It would be desirable to develop analytic tools to infer how the model parameters and the initial conditions affect the cosmological history. 
In this paper, we find an analytic bound that the axion mass parameter and decay constant must fulfill at all times as the universe expands while decreasing its acceleration rate.
This means that the DE scenario to which our bounds apply most naturally are the so-called \emph{thawing quintessence} models \cite{Caldwell:2005tm}; for reviews, see refs.~\cite{Copeland:2006wr, Tsujikawa:2013fta}.
These are EFTs where the state parameter $w_\phi$ of the quintessence field has been increasing over time, despite being still sufficiently close to $w_\phi=-1$ so as to drive cosmic acceleration today.
Our results equip one with analytic control over the phase space\footnote{By ``phase space'' we mean the space of dynamical observables, in the dynamical-system sense that will be laid out below. In practice, we can think of the density parameters $\smash{\Omega_\alpha = \rho_\alpha / \rho_{H}}$ and state parameters $\smash{w_\alpha = p_\alpha/\rho_\alpha}$, where $\smash{\rho_\alpha}$ and $\smash{p_\alpha}$ are the energy density and pressure of a given universe component, and $\rho_{H}$ is the critical energy density set by the Hubble parameter.
By ``parameter space'', also used below, we mean the space of couplings for the axion field.}
whose evolution is consistent with a given set of boundary conditions.
We emphasize that our results hold independently of the initial conditions: in particular, we do not need to resort to slow-roll analyses nor do we need to assume the initial field velocities and misalignment angles to be small.
In fact, the DESI DR1 data \cite{DESI:2024mwx} have been shown to favor the scenario where, starting displaced from the potential maximum, the axion initially climbs up the potential before reaching a turning point and subsequently rolling down \cite{Berbig:2024aee}.
More generally, the preferred initial displacement is correlated with the hilltop curvature and depends on the data set and adopted priors \cite{Bhattacharya:2024kxp}.

For instance, fixing the axion couplings and the measured energy densities today, as in refs.~\cite{DESI:2025fii, Luu:2025fgw, Urena-Lopez:2025rad, Lin:2025gne}, we are able to analytically identify a restricted phase-space subregion that might have led to those currently-measured values.
We found a bound of the form
\begin{equation}
   \Omega_\M^{l} - a \geq - (b + c \, w_\phi \Omega_\phi) \geq 0,
\end{equation}
where $\smash{\Omega_\M}$ and $\Omega_\phi$ represent the density parameters associated with matter and the axion field, respectively, as exemplified in figs.~\ref{fig.: DESI bounds} and \ref{fig.: DES bound}.
The numerical values of the coefficients $a$, $b$ and $c$ and of the power $l$ depend on the data sets that are taken into account, as in eqs.~(\ref{master bound: DESI+Pantheon+}, \ref{master bound: DESI+Union3}, \ref{master bound: DESI+DES}) and (\ref{AdS master bound: DES OmegaM}, \ref{AdS master bound: DES OmegaLambda}).
Clearly, relationships of this kind find a natural application in narrowing down the model and/or parameter space that are constrained using numerical approaches.

The reverse logic can also be applied, fixing our position in phase space at some point in the past and today -- e.g. fixing the Hubble and equation-of-state parameters -- to narrow down the allowed scalar couplings.
As an application, we make use of our analytic bounds to quantify the implications of a quantum gravity expectation known as the axionic weak gravity conjecture (AWGC) \cite{Arkani-Hamed:2006emk, Rudelius:2015xta, Brown:2015iha, Bachlechner:2015qja, Heidenreich:2015wga}; see refs.~\cite{Harlow:2022ich, Montero:2024qml} for a review.
Besides subplanckian axion decay constants, which have long been known to be implied by the AWGC, we find a lower bound on the axion mass in order to be compatible with the current universe, i.e.
\begin{equation}
    \m \geq O(10^2) H_\fin,
\end{equation}
where $H_\fin$ is the Hubble parameter today, as in eq.~(\ref{AWGC: mass lower bound}) and exemplified by fig.~\ref{fig.: (m,f)-parameter space bound}.
As we will discuss, this is in tension with the intuition of masses of order $m \leq O(1) H_0$ for the quintessence field \cite{Cicoli:2023opf}.
Finally, as a spinoff of our current work, we revisit a previous argument by Kamionkowski, Pradler and Walker \cite{Kamionkowski:2014zda}, and find a different parametric scaling for the likelihood that DE today emerges from a random distribution of initial conditions in the string axiverse.

\section{Cosmic axions}

Let the $d$-dimensional spacetime be described by the FLRW metric
\begin{equation} \label{FLRW-metric}
    d s_{1,d-1}^2 = - \de t^2 + a^2(t) \, \de l_{d-1}^2,
\end{equation}
where $a$ is the scale factor, with the Hubble parameter $\smash{H = \dot{a}/a}$.
While $d=4$ is our focus for cosmological applications, we keep an arbitrary $d$ in our general analysis, and we express the reduced Planck mass as $\smash{m_{\p, d} = \kappa_d^{-2/(d-2)}}$.
We consider a canonical field $\phi$, i.e. with kinetic term $T = \dot{\phi}^2/2$, subject to a scalar potential $V$, and a set of cosmological fluids of energy density $\rho_\alpha$ and pressures $p_\alpha = w_\alpha \rho_\alpha$, with $w_\alpha \in \; [-1,+1]$.
Then, the cosmological equations can be written as
\begin{subequations}
\begin{align}
    & \ddot{\phi} + (d-1) H \dot{\phi} + V' = 0, \label{FRW-KG eq.} \\[1.75ex]
    & \dot{\rho}_\alpha + (d-1) (1+w_\alpha) H \rho_\alpha = 0, \label{continuity eq.} \\
    & H^2 = \dfrac{2 \kappa_d^2}{(d-1) (d-2)} \biggl[ \dfrac{1}{2} \, \dot{\phi}^2 + V + \sum_\alpha \rho_\alpha \biggr], \label{Friedmann eq.}
\end{align}
\end{subequations}
where all variables are assumed to only depend on the cosmological time.
A combination of eqs.~(\ref{FRW-KG eq.}, \ref{continuity eq.}, \ref{Friedmann eq.}) further gives $-\dot{H} = \kappa_d^2 [\dot{\phi}^2 + \sum_\alpha (1+w_\alpha) \rho_\alpha]/(d-2)$.
We stress that we can study both flat ($k=0$) and spatially-hyperbolic ($k=-1/\ell^2)$ FLRW models: in the latter case, the curvature terms can be formally treated as a perfect fluid with state parameter $w = -(d-3)/(d-1)$.
Similarly, a cosmological constant would appear with $w = -1$.

As a working axion model, we will initially consider the potential
\begin{equation} \label{V}
    V = \m^2 \f^2 \, \Bigl[ 1 - \cos \, \Bigl(\dfrac{\phi}{\f}\Bigr) \Bigr],
\end{equation}
where $\f$ is the axion decay constant, with mass dimension $[\f] = [\phi] = d/2-1$, and with a $2 \pi \f$-periodicity for the field, and $\m$ is the axion mass parameter.
The potential in eq.~(\ref{V}) emerges ubiquitously in string compactifications \cite{Svrcek:2006yi, Arvanitaki:2009fg, Cicoli:2012sz, Marsh:2015xka}; see also ref.~\cite{Anchordoqui:2025fgz}.
In 4d models, it provided the original model of ``natural inflation'' \cite{Freese:1990rb, Adams:1992bn, Frieman:1995pm, Pajer:2013fsa}.
The same potential is also widely discussed as a model of DE; see e.g. ref.~\cite{Copeland:2006wr}.

If we assume the initial conditions to be such that $\smash{\dot{\phi}_\In = 0}$ and such that the Hubble scale is very large, then the axion is effectively frozen by Hubble friction.
Here, ``initial'' refers to the beginning $t_b$ of the cosmological
time interval under consideration; a formal definition will be specified
in sec.~\ref{ssec.: analytic bound}.
After $H$ has gone down enough, there are two opposite limiting regimes:
\begin{enumerate*}[label=(\roman*)]
    \item if $\phi_\In \simeq 0$, the axion oscillates around the minimum, with the scalar fluid behaving like matter, before eventually freezing;
    \item if $\phi_\In \simeq \f \pi$, the axion field slowly rolls towards the minimum, with a temporary effective parameter $w_\phi < -1/3$, in which case the axion energy density may come to dominate and source a period of cosmic acceleration, before eventually moving towards the minimum.
\end{enumerate*}
Yet, the axion may as well be in an intermediate regime.
This is the most likely scenario if the initial positions are distributed uniformly.
However, making quantitative predictions in such a regime is much harder.
Furthermore, one may be interested in situations where the initial axionic kinetic energy is not negligible.

In this note, we derive analytic bounds that allow one to control the dynamical evolution of the cosmological observables across generic phases of axion field displacement, for generic initial conditions.
Our results will in fact apply to generic potentials of the form
\begin{equation} \label{generic V}
    V_{\mathrm{tot}} = \Lambda + \m^2 \f^2 \, \Bigl[ 1 - \cos \, \Bigl(\dfrac{\phi}{\f}\Bigr) \Bigr]^p,
\end{equation}
for arbitrary values of the vacuum energy $\Lambda > -2^p \m^2 \f^2$, of either sign, and of the power $p > 1$.
For an illustration, see fig.~\ref{fig.: axion potential}.
We will also generalize the conclusions to multi-field scenarios.

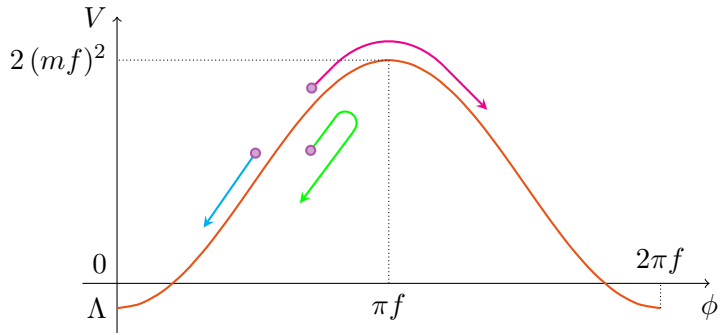
\begin{figure}[h]
    \centering
    
    \begin{tikzpicture}[xscale=1.15,yscale=1.65,every node/.style={font=\normalsize},ball/.style={draw,circle,minimum size=1.2mm,inner sep=0pt,outer sep=0pt,violet!65!white,solid,fill=violet!35!white}]

    \draw[->] (-0.4,0) -- (6.85,0) node[below]{$\phi$};
    \draw[->] (0,-0.4) -- (0,2.15) node[left]{$V_{\mathrm{tot}}$};

    \draw[domain=0:2*pi, smooth, thick, variable=\x, orange!65!purple] plot ({\x}, {0.8-cos(deg(\x))}) node[above right, black]{};

    \node[above left] at (0,0){$0$};
    \node[left] at (0,-0.2){$\Lambda$};

    \draw[densely dotted] (pi,0) node[below]{$\pi \f$} -- (pi,1.8) -- (0,1.8) node[left]{$\Lambda + 2 \, (\m \f)^2$};
    
    \draw[densely dotted, above] (2*pi,-0.2) -- (2*pi,0) node[above]{$2\pi \f$};

    \draw[thick, magenta] (2.25,1.575) node[ball]{} -- (2.5,1.75);
    \draw[thick, magenta] (2.5,1.75) arc[start angle=125, end angle=55,radius=1.116];
    \draw[thick, magenta, -stealth] (3.78,1.75) -- (4.28,1.40);

    \draw[thick, green,rotate=-2] (2.2,1.15) node[ball]{} -- (2.50,1.45) arc[start angle=135, end angle=-45, x radius=0.15/1.15, y radius=0.15/1.65];
    \draw[thick, green, -stealth,rotate=-2] (2.69,1.32) -- (2.09,0.72);
    
    \draw[thick, cyan, -stealth] (1.6,1.05) node[ball]{} -- (1.0,0.45);

    \end{tikzpicture}
    
    \caption{The scalar potential $\smash{V_{\mathrm{tot}} = \Lambda + \m^2 \f^2 \, \bigl[ 1 - \cos \, (\phi/\f) \bigr]}$, with $\Lambda<0$ (orange), along with illustrative ``field trajectories'': a monotonic fall (cyan), a fall after a motion inversion (green), and a hilltop-crossing (magenta).}
    
    \label{fig.: axion potential}
\end{figure}

\section{Dynamics of transient epochs}
In this section, we compute dynamically a lower bound on the duration of the acceleration phase for models of axion-driven cosmic acceleration.

Our results are stated in terms of the parameter $\epsilon = - \dot{H}/H^2$, which measures the variation of the Hubble factor accounting for all the components of the universe. This $\epsilon$ parameter is related to the deceleration parameter $q = - a \ddot{a} / \dot{a}^2$ as $\epsilon = 1 + q$.
On the other hand, the first slow-roll parameter $\smash{\epsilon_V = [(d-2)/(4 \kappa_d^2)] (\ab \mathrm{grad} \, V \ab /V)^2}$ represents a measure of the steepness of the potential, but a priori, it does not capture any dynamics.

\subsection{Dynamical system}
To start, let $\Lambda=0$ and $p=1$.
A fruitful way to find solutions employs the formulation of the equations as an autonomous dynamical system of ordinary differential equations (ODEs).
This formulation was originally introduced to solve cosmological solutions with exponential potentials \cite{Copeland:1997et}, where one can exploit the proportionality between $V$ and $V'$. 
Here, we show how to apply a similar logic to a periodic potential, which after all can be obtained via the analytic continuation of an exponential.

Given
\begin{equation}
    \gamma = \dfrac{1}{\kappa_d \f},
\end{equation}
let
\begin{subequations}
\begin{align}
    x & = \dfrac{\kappa_d}{\sqrt{d-1} \sqrt{d-2}} \, \dfrac{\dot{\phi}}{H}, \label{x} \\
    \yr & = - \dfrac{\kappa_d \, \m \f}{\sqrt{d-1} \sqrt{d-2}} \, \dfrac{1}{H} \, \sin \, \Bigl(\dfrac{\kappa_d}{2} \gamma \phi\Bigr), \label{yr} \\
    \yi & = + \dfrac{\kappa_d \, \m \f}{\sqrt{d-1} \sqrt{d-2}} \, \dfrac{1}{H} \, \cos \, \Bigl(\dfrac{\kappa_d}{2} \gamma \phi\Bigr), \label{yi} \\
    z_\alpha & = \dfrac{\kappa_d \sqrt{2}}{\sqrt{d-1} \sqrt{d-2}} \, \dfrac{\sqrt{\rho_\alpha}}{H}, \label{z} \\[1.4ex]
    h & = (d-1) H, \label{h}
\end{align}
\end{subequations}
as well as
\begin{subequations}
\begin{align}
    c & = \dfrac{\gamma}{\Gamma_d} = \dfrac{1}{2} \, \dfrac{\sqrt{d-2}}{\sqrt{d-1}} \, \dfrac{1}{\kappa_d \f}, \label{c} \\
    l & = \dfrac{\m \Gamma_d}{\sqrt{2} \gamma} = \sqrt{2} \, \dfrac{\sqrt{d-1}}{\sqrt{d-2}} \, \kappa_d \m \f, \label{l}
\end{align}
\end{subequations}
with $\Gamma_d = 2 \sqrt{d-1}/\sqrt{d-2}$.
The functions $x$ and $z_\alpha$ represent the axionic kinetic and fluid energy densities, respectively, whereas $\yi$ and $\yr$ parameterize the potential and its derivatives, all in appropriate Hubble units.
Then, eqs.~(\ref{FRW-KG eq.}, \ref{continuity eq.}, \ref{Friedmann eq.}) are equivalent to the autonomous dynamical system of ordinary differential equations
\begin{subequations}
\begin{align}
    \dot{x} & = \biggl[ -x + 4 c \, \yr \yi + x \, \Bigl[ (x)^2 + \sum_\beta \dfrac{1+w_\beta}{2} \, (z_\beta)^2 \Bigr] \biggr] \, h, \label{x-equation} \\
    \yrd & = \biggl[ \Bigl[ (x)^2 + \sum_\beta \dfrac{1+w_\beta}{2} \, (z_\beta)^2 \Bigr] \yr - c x \yi \biggr] \, h, \label{yr-equation} \\[0.45ex]
    \yid & = \biggl[ \Bigl[ (x)^2 + \sum_\beta \dfrac{1+w_\beta}{2} \, (z_\beta)^2 \Bigr] \yi + c x \yr \biggr] \, h, \label{yi-equation} \\[0.45ex]
    \dot{z}_\alpha & = \biggl[ - \dfrac{1+w_\alpha}{2} + (x)^2 + \sum_\beta \dfrac{1+w_\beta}{2} \, (z_\beta)^2 \biggr] \, z_\alpha h, \label{z-equation} \\
    \dot{h} & = - \biggl[ (x)^2 + \sum_\beta \dfrac{1+w_\beta}{2} \, (z_\beta)^2 \biggr] h^2, \label{h-equation}
\end{align}
\end{subequations}
subject to the constraints
\begin{subequations}
    \begin{align}
        & (x)^2 + 4 (\yr)^2 + \sum_\beta (z_\beta)^2 = 1, \label{sphere condition} \\
        & (\yr)^2 + (\yi)^2 = \dfrac{1}{2} \dfrac{l^2}{h^2}. \label{(yr,yi)-bounds}
    \end{align}
\end{subequations}
This choice of variables is the most efficient one for our study, and its relationship with other variable choices is in app. \ref{app: alternative coordinates for the dynamical system}.
The $\epsilon$-parameter appears explicitly in the ODEs via $\xi = - \dot{h}/h^2 = \epsilon/(d-1)$.
The simplicity of the formulation above is clear: for instance, via eqs.~(\ref{h-equation}, \ref{sphere condition}) we immediately see that $\smash{\xi < (x)^2 + \sum_\beta (z_\beta)^2 \leq 1}$, i.e. $\epsilon \leq d-1$.

Below, we will prove a dynamical bound on $\xi$, and hence on $\epsilon$.
For simplicity, the discussion will be initially referred to the case of a single perfect fluid $\rho$: with it representing matter, and the field sourcing DE, this is an economic and yet non-trivial model of the current universe.
All results generalize straightforwardly in the presence of multiple fluids.

\subsection{Analytic bound} \label{ssec.: analytic bound}

Our goal is to compute the duration of a cosmological epoch in which the $\epsilon$-parameter has evolved from a given initial value $\epsilon_\In$ to a final value $\epsilon_\fin$.
For instance, this may be the time elapsed between matter/DE equivalence and today.

Formally, let $t_\In$ be an initial time such that $\xi_\In = \xi(t_\In)$, and let $\xi_\fin$ be an arbitrary value $\xi_\fin \geq \xi_\In$.
Then, let $t_\fin$ be the earliest time $t \geq t_\In$ such that $\xi(t)$ has grown up to the value $\xi_\fin$; in other words, $\smash{t_\fin}$ represents the earliest time such that $\smash{\xi(t) \leq \xi_\fin}$ at all times $t \in \; [t_\In, t_\fin]$.
For a fixed $\xi_\fin$, our results will be valid for the time interval $[t_\In, t_\fin]$, of duration $\Delta t_{\In \fin} = t_\fin - t_\In$.
For the time being, we will refer to the elapsed time; later, we will express the bound in terms of cosmological observables such as the density parameters.
Because the dynamical system is formulated naturally in the reduced variables in eqs.~(\ref{x}, \ref{yr}, \ref{yi}, \ref{z}, \ref{h}) and eqs.~(\ref{c}, \ref{l}), so will be the formal results in their bare derivation: this makes it immediate to follow the mathematical reasoning. Right after the derivation, the results will be formulated in terms of physical variables, accompanied by a physical interpretation.
 
\subsubsection{Universal bound}
Here we derive a universal bound on the minimal duration of the time interval $[t_\In, t_\fin]$.

\begin{lemma} \label{lemma: abx upper bound}
    One has
    \begin{equation} \label{eq.: abx upper bound}
        \ab x(t) \ab \leq \ab x_\In \ab + \sqrt{2} \, c l (t-t_\In).
    \end{equation}
\end{lemma}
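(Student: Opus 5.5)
The plan is to work directly with the $x$-equation (\ref{x-equation}). We show that the only term able to increase $\ab x \ab$ is the potential-gradient term $4c\,\yr\yi\, h$, and that this term is bounded by a constant using the two constraints (\ref{sphere condition}) and (\ref{(yr,yi)-bounds}). Integrating in time then gives (\ref{eq.: abx upper bound}).

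First I rewrite (\ref{x-equation}) as $\dot{x} = -x\,(1-S)\,h + 4c\,\yr\yi\,h$, with $S = (x)^2 + \sum_\beta \tfrac{1+w_\beta}{2}(z_\beta)^2$. Since $w_\beta \in [-1,1]$, one has $0 \le \tfrac{1+w_\beta}{2} \le 1$. Hence $S \le (x)^2 + \sum_\beta (z_\beta)^2 \le 1$ by the sphere condition (\ref{sphere condition}). Moreover $h>0$ in an expanding phase. So the first term always has the sign opposite to $x$ (or vanishes), i.e. $x\cdot[-x(1-S)h] \le 0$. It follows that wherever $x \neq 0$,
\begin{equation*}
\frac{\de}{\de t}\ab x \ab = \mathrm{sgn}(x)\,\dot{x} \le 4c\,\ab \yr \ab\,\ab \yi \ab\, h .
\end{equation*}

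Next I bound the right-hand side uniformly. The sphere condition gives $4(\yr)^2 \le 1$, i.e. $\ab\yr\ab \le 1/2$. The constraint (\ref{(yr,yi)-bounds}) gives $(\yi)^2 \le l^2/(2h^2)$, i.e. $\ab \yi\ab\, h \le l/\sqrt{2}$. Note that $h$ cancels here, which is what makes the bound independent of the Hubble rate. Therefore $4c\,\ab\yr\ab\,\ab\yi\ab\,h \le 4c\cdot\tfrac12\cdot\tfrac{l}{\sqrt2} = \sqrt{2}\,c\,l$. (Using instead $\ab\yr\yi\ab \le l^2/(4h^2)$ would leave an unwanted $1/h$, so the split through the sphere bound on $\yr$ is the right choice.)

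The only technical point is that $\ab x \ab$ is not differentiable where $x=0$. I would handle this by a standard comparison argument. Fix $t$, and if $x(t)\neq 0$ let $s\in[t_\In,t]$ be the last time at which $x$ vanishes, with $s = t_\In$ if there is none. On $(s,t]$, $x$ has constant sign and the differential inequality integrates to $\ab x(t)\ab \le \ab x(s)\ab + \sqrt2\,c\,l\,(t-s)$. Since $\ab x(s)\ab$ is either $0$ or $\ab x_\In\ab$, and $t - s \le t-t_\In$, this yields (\ref{eq.: abx upper bound}). Equivalently, one can apply the estimate to $\sqrt{x^2+\delta}$ and let $\delta\to0$. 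I expect this regularity step to be the only mild obstacle; the rest is a direct consequence of the constraints.
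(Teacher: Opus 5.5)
Your proposal is correct and follows the paper's proof essentially step for step. You use $\xi \le 1$ to discard the damping term, then $\ab \yr \ab \le 1/2$ together with $\ab \yi \ab h \le l/\sqrt{2}$ to bound the gradient term by $\sqrt{2}\,c l$, and then integrate. Your one addition is the last-zero comparison argument for the non-differentiability of $|x|$ at $x=0$, which the paper passes over silently; it is a valid refinement.
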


\begin{proof}
    Looking at eq.~(\ref{x-equation}), we may describe the time evolution of $\smash{\ab x \ab = \sqrt{(x)^2}}$ as
    \begin{align*}
        \dfrac{\de}{\de t} \, \ab x \ab = \bigl[ - \ab x \ab (1- \xi) + 4 \mathrm{sgn}(x) \, c \, \yr \yi \bigr] \, h.
    \end{align*}
    Because $\xi \leq 1$, we may write $\smash{(\de / \de t) \, \ab x \ab \leq 4 \mathrm{sgn}(x) \, c \, \yr \yi \, h}$.
    In view of eq.~(\ref{sphere condition}), we know that $\smash{\ab \yr \ab \leq 1/2}$, and, in view of eq.~(\ref{(yr,yi)-bounds}), we can write $\smash{\ab \yi \ab h \leq l/\sqrt{2}}$.\footnote{We highlight that the bound becomes optimal at late times.
    This is because $\ab \yr \ab \leq 1/2$ by eq.~(\ref{sphere condition}) and therefore, as long as $\lim_{t \to \infty} h = 0$, eq.~(\ref{(yr,yi)-bounds}) actually implies $\smash{\lim_{t \to \infty} \ab \yi \ab h = l/\sqrt{2}}$.}
    Hence, we may write $\smash{(\de / \de t) \ab x \ab \leq \sqrt{2} \, c l}$, whence eq.~(\ref{eq.: abx upper bound}).
\end{proof}

From now on, let $\xi_\fin \leq (1+w)/2$.
Physically, this is a totally safe assumption in a DE model, with $d=4$: in a minimal setting with matter ($w=0$) and the DE field, a phase of cosmic acceleration like the one the universe has been in the past $\e$-fold or so certainly requires $\epsilon = 3 \xi < 1$, consistently with $\epsilon \leq \epsilon_\fin < 3/2$.

\begin{lemma} \label{lemma: time-interval lower bound}
    The duration of the time interval $[t_\In, t_\fin]$ is bounded from below as
    \begin{equation} \label{eq.: time-interval lower bound}
        \Delta t_{\In \fin} \geq \dfrac{1}{2 \sqrt{2 \xi_\fin}} \dfrac{1}{cl} \, (\xi_\fin - \xi_\In).
    \end{equation}
\end{lemma}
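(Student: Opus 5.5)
The plan is to bound the growth rate $\dot{\xi}$ from above on $[t_\In, t_\fin]$ by a constant and then integrate. For a single fluid, $\xi = (x)^2 + \tfrac{1+w}{2}(z)^2$ by eq.~(\ref{h-equation}). I will differentiate this using eqs.~(\ref{x-equation}) and (\ref{z-equation}) rather than substituting the constraint (\ref{sphere condition}). This should give the clean expression
\begin{equation*}
    \dot{\xi} = 2h \biggl[ - (x)^2 (1-\xi) + 4c \, x \, \yr \yi + \dfrac{1+w}{2} \, (z)^2 \Bigl( \xi - \dfrac{1+w}{2} \Bigr) \biggr].
\end{equation*}
The first step is to verify this identity, which is a short algebraic check.

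Next I will discard the two sign-definite terms. We know $\xi \leq 1$, as noted below eq.~(\ref{sphere condition}), so $h>0$ and the first term is nonpositive. On the interval $[t_\In, t_\fin]$, the definition of $t_\fin$ gives $\xi(t) \leq \xi_\fin$. The standing assumption $\xi_\fin \leq (1+w)/2$ then makes the third term nonpositive as well. This leaves $\dot{\xi} \leq 8 c \, \ab x \ab \, \ab \yr \ab \, \ab \yi \ab h$.

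I will then reuse the same estimates as in the proof of Lemma~\ref{lemma: abx upper bound}. These are $\ab \yr \ab \leq 1/2$ from eq.~(\ref{sphere condition}) and $\ab \yi \ab h \leq l/\sqrt{2}$ from eq.~(\ref{(yr,yi)-bounds}). Together they give $\dot{\xi} \leq 2\sqrt{2} \, c l \, \ab x \ab$.

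The key observation, and the only potentially delicate point, is how to control $\ab x \ab$. Lemma~\ref{lemma: abx upper bound} is the wrong tool here, because it would produce a quadratic-in-time bound. Instead I will use $(x)^2 \leq \xi$, which holds because $(1+w)(z)^2 \geq 0$. This gives $\ab x \ab \leq \sqrt{\xi} \leq \sqrt{\xi_\fin}$ throughout the interval. Hence $\dot{\xi} \leq 2\sqrt{2 \xi_\fin}\, c l$ on $[t_\In, t_\fin]$.

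Integrating from $t_\In$ to $t_\fin$ gives $\xi_\fin - \xi_\In \leq 2 \sqrt{2\xi_\fin} \, c l \, \Delta t_{\In \fin}$, which is eq.~(\ref{eq.: time-interval lower bound}). In the multi-fluid case each fluid contributes a term $\tfrac{1+w_\alpha}{2}(z_\alpha)^2(\xi - \tfrac{1+w_\alpha}{2})$. These terms remain nonpositive provided $\xi_\fin \leq \min_\alpha (1+w_\alpha)/2$, so the same argument goes through.
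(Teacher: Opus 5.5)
Your proof is correct. The identity for $\dot{\xi}$ follows directly from eqs.~(\ref{x-equation}, \ref{z-equation}), both discarded terms are nonpositive on $[t_\In, t_\fin]$, and the remaining estimates give exactly the constant $2\sqrt{2\xi_\fin}\,cl$.

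The paper uses the same ingredients in integrated rather than differential form. Monotonicity of $z$ gives $(x)^2-(x_\In)^2\geq\xi-\xi_\In$. It then factors $(x)^2-(x_\In)^2=(\ab x \ab+\ab x_\In \ab)(\ab x \ab-\ab x_\In \ab)$, applies Lemma~\ref{lemma: abx upper bound} only to the second factor, and bounds the first factor by $2\sqrt{\xi_\fin}$. So your remark that Lemma~\ref{lemma: abx upper bound} would force a quadratic-in-time bound is not accurate. Applied to one factor only, it yields exactly the same linear estimate you obtain.

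Your argument is the infinitesimal counterpart of the paper's, since $\tfrac{\de}{\de t}(x)^2=2\ab x \ab\,\tfrac{\de}{\de t}\ab x \ab$ wherever $x\neq0$. It has the small advantage of working with the smooth quantity $\xi$, so you never differentiate $\ab x \ab$ at zeros of $x$. The paper's integrated form is the one it later refines in app.~\ref{app.: refined bounds}, by retaining the damping term $-\ab x \ab(1-\xi)$ and the explicit decay of $z$.

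As a bonus, you have $\dot{\xi}\leq2\sqrt{2}\,cl\sqrt{\xi}$ before replacing $\xi$ by $\xi_\fin$. Integrating $\sqrt{\xi}$ instead gives the slightly stronger bound $\Delta t_{\In\fin}\geq(\sqrt{\xi_\fin}-\sqrt{\xi_\In})/(\sqrt{2}\,cl)$, which improves the stated one by the factor $2\sqrt{\xi_\fin}/(\sqrt{\xi_\fin}+\sqrt{\xi_\In})\in[1,2]$.
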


\begin{proof}
    Let us split the proof into two main steps.
    \begin{enumerate}[label=\roman*.,wide, labelwidth=!, labelindent=0pt, parsep=0pt, itemsep=0pt,leftmargin=12pt]
        \item In view of eq.~(\ref{z-equation}), one may write the differential inequalities $\smash{\dot{z} \leq [-(1+w)/2 + \xi_\fin] \, z h \leq 0}$.
        Then, because the function $z$ is non-increasing, one may write
        \begin{align*}
            (x)^2 - (x_\In)^2 = \xi - \xi_\In - \dfrac{1+w}{2} \, \bigl[ (z)^2 - (z_\In)^2 \bigr] \geq \xi - \xi_\In.
        \end{align*}
        \item After writing $\smash{(x)^2 - (x_\In)^2 = [\ab x \ab + \ab x_\In \ab] [\ab x \ab - \ab x_\In \ab]}$, in view of eq.~(\ref{eq.: abx upper bound}) we deduce
        \begin{align*}
            (x)^2 - (x_\In)^2 & \leq [\ab x \ab + \ab x_\In \ab] \sqrt{2} \, c l (t-t_\In) \leq 2 \sqrt{2 \xi_\fin} \, cl (t-t_\In),
        \end{align*}
        where we also took advantage of the inequality $\smash{(x)^2 \leq \xi_\fin - [(1+w)/2] \, (z)^2 \leq \xi_\fin}$.
    \end{enumerate}
    By combining the two bounds above, it is then immediate to infer the bound in eq.~(\ref{eq.: time-interval lower bound}).
\end{proof}

The calculation can be improved:
\begin{enumerate*}[label=(\alph*)]
    \item in constraining $\ab x \ab$, we may keep the linear term in $\ab x \ab$;
    \item in constraining $z$, we can constrain the functional form of $z$, beyond just observing that it decreases.
\end{enumerate*}
We relegate these technical advances to app. \ref{app.: refined bounds}: see in particular eq.~(\ref{eq.: improved time-interval lower bound}).

Let us now refine the bound with a simple observation.
With the misalignment angle $\delta = \pi - \phi/\f$, and writing $\smash{\yi h = (l/\sqrt{2}) \, \sin \, \delta/2}$, we may parameterize the maximum misalignment achieved within the interval $[t_\In, t_\fin]$ through the identification $\smash{b = \max_{t \in\; [t_\In, t_\fin]} \ab \sin \, [\delta(t)/2] \ab}$.
As a consequence, we may write $\smash{\ab \yi \ab h \leq b l /\sqrt{2}}$.

\begin{lemma} \label{lemma: refined time-interval lower bound}
    The duration of the time interval is bounded from below as
    \begin{equation} \label{eq.: refined time-interval lower bound}
        \Delta t_{\In \fin} \geq \dfrac{1}{2 \sqrt{2 \xi_\fin}} \dfrac{1}{b cl} \, (\xi_\fin - \xi_\In).
    \end{equation}
\end{lemma}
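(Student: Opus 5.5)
The plan is to rerun the proofs of Lemma~\ref{lemma: abx upper bound} and Lemma~\ref{lemma: time-interval lower bound} verbatim, with the single change that the estimate $\ab \yi \ab h \leq l/\sqrt{2}$ is replaced by $\ab \yi \ab h \leq b l/\sqrt{2}$ on $[t_\In, t_\fin]$. First I would verify this sharper estimate. By eqs.~(\ref{yi}) and (\ref{h}), $\yi h$ equals $(l/\sqrt{2})$ times a cosine of $\kappa_d\gamma\phi/2 = \phi/(2\f)$. With $\delta = \pi - \phi/\f$ this cosine is $\cos(\pi/2 - \delta/2) = \sin(\delta/2)$, which recovers the identity $\yi h = (l/\sqrt{2})\sin(\delta/2)$ quoted before the statement. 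The definition of $b$ as the maximum of $\ab\sin[\delta(t)/2]\ab$ over the interval then gives $\ab\yi\ab h \leq bl/\sqrt{2}$ for all $t\in[t_\In,t_\fin]$.

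Second, I would redo Lemma~\ref{lemma: abx upper bound} on the interval. The evolution equation for $\ab x\ab$ is
\begin{align*}
    \dfrac{\de}{\de t} \, \ab x \ab = \bigl[ - \ab x \ab (1- \xi) + 4 \,\mathrm{sgn}(x) \, c \, \yr \yi \bigr] \, h .
\end{align*}
The first term is non-positive because $\xi\leq1$. For the second term, the sphere condition (\ref{sphere condition}) gives $\ab\yr\ab\leq 1/2$, and combining this with the sharper estimate yields $(\de/\de t)\ab x\ab \leq 4c\cdot\tfrac12\cdot bl/\sqrt2 = \sqrt2\, bcl$. Integrating gives
\[
\ab x(t)\ab \leq \ab x_\In\ab + \sqrt2\, bcl\,(t-t_\In), \qquad t\in[t_\In,t_\fin].
\]
Here it matters that $b$ bounds the misalignment only on that interval, so the integration must not be extended past $t_\fin$. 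If $x$ vanishes somewhere, the sign function causes no real difficulty: one can work with $(x)^2$ directly, or argue with the upper Dini derivative.

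Third, I would repeat both steps of Lemma~\ref{lemma: time-interval lower bound}, which use the standing assumption $\xi_\fin\leq(1+w)/2$. Since $\xi\leq\xi_\fin$ on the interval, $z$ is non-increasing, and therefore $(x)^2-(x_\In)^2\geq \xi-\xi_\In$. In the other direction, $\ab x\ab,\ab x_\In\ab\leq\sqrt{\xi_\fin}$, so
\[
(x)^2-(x_\In)^2 \leq 2\sqrt{\xi_\fin}\,\sqrt2\, bcl\,(t-t_\In).
\]
Evaluating both inequalities at $t=t_\fin$, where $\xi=\xi_\fin$, gives $\xi_\fin-\xi_\In\leq 2\sqrt{2\xi_\fin}\, bcl\,\Delta t_{\In\fin}$, which is eq.~(\ref{eq.: refined time-interval lower bound}).

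I do not expect a real obstacle. The only step needing care is the trigonometric identification of $\yi h$ with $\sin(\delta/2)$, together with the remark that the bound is nontrivial only when $b>0$; if $b=0$, then $\phi$ sits at the hilltop throughout and $\xi$ cannot grow.
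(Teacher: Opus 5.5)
Your proposal is correct and follows the paper's own route. The paper's proof is simply the instruction to rerun the proofs of Lemmas~\ref{lemma: abx upper bound} and \ref{lemma: time-interval lower bound} with the sharper estimate $\ab \yi \ab h \leq b l/\sqrt{2}$, which is exactly what you carry out. Your explicit check of $\yi h = (l/\sqrt{2})\sin(\delta/2)$, your point that the integration must stay within $[t_\In, t_\fin]$ where $b$ is defined, and your treatment of the degenerate case $b=0$ add useful detail without changing the argument.
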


\begin{proof}
    The proof proceeds exactly as for eqs.~(\ref{eq.: abx upper bound}, \ref{eq.: time-interval lower bound}), with the improved parameterization of the range of $\yi$.
\end{proof}

\subsubsection{Physical units}

In physical units, we have $c l = \m/\sqrt{2}$.
If we  define $\smash{C_{\In \fin} = \max_{t \in\; [t_\In, t_\fin]} \ab \cos \, [\phi(t)/(2\f)] \, \ab}$ -- which is just a measure of the maximum displacement from the hilltop throughout the dynamical evolution from $t_\In$ to $t_\fin$ --, then we can simply express eq.~(\ref{eq.: abx upper bound}) as
\begin{equation}
    \dfrac{\kappa_d \ab \dot{\phi} \ab}{H} \leq \dfrac{\kappa_d \ab \dot{\phi}_\In \ab}{H_\In} + \sqrt{d-1} \sqrt{d-2} \, m \Delta t_{\In \fin},
\end{equation}
namely a bound on the field velocity in Hubble units.
In addition, eq.~(\ref{eq.: refined time-interval lower bound}) corresponds to
\begin{equation} \label{refined physical time-interval lower bound}
    \Delta t_{\In \fin} \geq \dfrac{1}{2 m} \, \dfrac{1}{\sqrt{d-1} \sqrt{\epsilon_\fin}} \, \dfrac{\epsilon_\fin - \epsilon_\In}{C_{\In \fin}}.
\end{equation}
This is a quantitative relationship between the duration of a given accelerating phase and the displacement angle.
It applies not only to the hilltop region, but to all regions in the $\phi$-domain, irrespectively of the value of $\epsilon_V$, and it requires no assumptions on the initial conditions.
In cases where $\ab \phi_\In/\f - \pi \ab \leq \ab \phi_\fin/\f - \pi \ab$, i.e. the initial angle is closer to the hilltop than the final angle, we may identify $\smash{C_{\In \fin} = \ab \cos \, [\phi_\fin/(2\f)] \, \ab}$.
Examples are monotonic falls, falls following a motion inversion, and hilltop-crossings, such as in fig.~\ref{fig.: axion potential}; ``field trajectories'' different from those displayed in fig.~\ref{fig.: axion potential} are also possible, just with different values for $C_{\In \fin}$.

\subsection{Generalizations}
Here we describe a few immediate generalizations, which come up as natural extensions of the basic model above both in numerical data fits and theoretical realizations.

\subsubsection{Potentials with dS or AdS shifts}

We can include a cosmological constant $\Lambda$.
If $\Lambda > 0$, the model hosts a stable de Sitter (dS) vacuum; if $\Lambda < 0$, the model instead hosts a stable AdS vacuum.\footnote{Of course, even if $\Lambda <0$, there can exist a phase of cosmic acceleration if $\ab \Lambda \ab \leq 2 \m^2 \f^2$ and the initial conditions are such that $V_{\Lambda,\In} > 0$.}

In the presence of a cosmological constant $\Lambda$, the dynamical system is modified straightforwardly.
Indeed, eqs.~(\ref{x-equation}, \ref{yr-equation}, \ref{yi-equation}, \ref{z-equation}, \ref{h-equation}) are unchanged.
Then, given
\begin{equation} \label{L}
    L^2 = \dfrac{d-1}{d-2} \, 2 \smash{\kappa_d^2} \Lambda,
\end{equation}
the geometric constraint in eq.~(\ref{sphere condition}) gets modified to
\begin{equation} \label{sphere condition with c.c.}
    (x)^2 + 4 (\yr)^2 + (z)^2 = 1 - \dfrac{L^2}{h^2}.
\end{equation}
Within our derivation method, this affects our estimate on the range of $\yr$.
In particular, the variable $\yr$ is now bounded as $\smash{\ab \yr \ab \leq \sqrt{1 - L^2/h^2}/2}$.
Because $h$ is a non-increasing function, we can proceed as follows, based on the sign of $L^2$.
\begin{itemize}[wide, labelwidth=!, labelindent=0pt, parsep=4pt, itemsep=0pt,leftmargin=12pt]
    \item With an AdS shift, where $L^2<0$, we get $\smash{\ab \yr \ab \leq \sqrt{1 + \ab L^2 \ab/h^2}/2 \leq \sqrt{1 + \ab L^2 \ab/h_\fin^2}/2}$.
    Then, the bound in eq.~(\ref{eq.: refined time-interval lower bound}) gets modified to
    \begin{equation} \label{AdS-models: time-interval lower bound}
        \Delta t_{\In \fin} \geq \dfrac{1}{2 \sqrt{2 \xi_\fin}} \dfrac{1}{b cl} \, \dfrac{\xi_\fin - \xi_\In}{\sqrt{1 + \ab L^2 \ab / h_\fin^2}}.
    \end{equation}
    \item With a dS shift, where $L^2>0$, we get $\smash{\ab \yr \ab \leq \sqrt{1 - \ab L^2 \ab/h^2}/2 \leq \sqrt{1 - \ab L^2 \ab/\smash{h_\In^2}}/2}$.
    Then, the bound in eq.~(\ref{eq.: refined time-interval lower bound}) gets modified to
    \begin{equation} \label{dS-models: time-interval lower bound}
        \Delta t_{\In \fin} \geq \dfrac{1}{2 \sqrt{2 \xi_\fin}} \dfrac{1}{b cl} \, \dfrac{\xi_\fin - \xi_\In}{\sqrt{1 - \ab L^2 \ab/\smash{h_\In^2}}}.
    \end{equation}
    As a consistency check, we note that the time interval is infinite if e.g. $h_\In = L$.
    This is correct because, following eq. (\ref{sphere condition with c.c.}), this initial condition is the infinitely-fine tuned scenario with the field sitting still at a minimum, with zero fluid energy density, thereby realizing classically an eternal dS solution.
\end{itemize}
Coherently with physical intuition, a negative cosmological constant lowers the minimal time-interval length, while a positive one achieves the opposite effect.

All results will be expressed in physical units below, after the further generalization to multi-axion scenarios.

\subsubsection{Multiple axions}
Let us consider a multifield generalization of the model we discussed before, i.e. $n$ canonical axions $\phi^a$, for $a=1,\dots,n$, with a total potential $\smash{V = \sum_{a=1}^n (\m_a \f^a)^2 [1 - \cos \, (\phi^a/\f^a)]}$.

In this case, the dynamical system is generalized straightforwardly, with $n$ variables $x^a$, $\smash{\yr^a}$, $\smash{\yi^a}$ and $2 n$ parameters $c^a$ and $l^a$.
It is clear that the bound in eq.~(\ref{refined physical time-interval lower bound}) is modified immediately via the substitution $\m \mapsto \sum_{a=1}^n m_a$.

As should be clear by now, a combination of the results above is also simple.
For brevity, let $\smash{\Omega_\Lambda = 2 \kappa_d^2 \Lambda / [(d-1)(d-2) H^2]}$.
Then, in the case of an axiverse with an eventual AdS minimum, one may write
\begin{equation} \label{axiverse AdS-models: physical time-interval lower bound}
    \Delta t_{\In \fin} \geq \dfrac{1}{2 M_n} \, \dfrac{1}{\sqrt{d-1} \sqrt{\epsilon_\fin}} \, \dfrac{\epsilon_\fin - \epsilon_\In}{\sqrt{1 + \ab \Omega_{\Lambda,\fin} \ab}},
\end{equation}
while for the case of a dS uplift we get
\begin{equation} \label{axiverse dS-models: physical time-interval lower bound}
    \Delta t_{\In \fin} \geq \dfrac{1}{2 M_n} \, \dfrac{1}{\sqrt{d-1} \sqrt{\epsilon_\fin}} \, \dfrac{\epsilon_\fin - \epsilon_\In}{\sqrt{1 - \Omega_{\Lambda,\In}}},
\end{equation}
where for brevity we defined the effective total mass parameter $\smash{M_n = \sum_{a=1}^n m_a \, C_{\In \fin}^a}$, with $\smash{C_{\In \fin}^a = \max_{t \in\; [t_\In, t_\fin]} \ab \cos \, [\phi^a(t)/(2\f^a)] \, \ab}$.

A generalization to a model with a potential $\smash{V = \sum_{a=1}^n \Lambda_a [1 - \cos\, ( \sum_b (f^{-1})^a{}_b \phi^b + \delta^a)]}$ -- see e.g. ref.~\cite{Katewongveerachart:2026ovj} for a recent study of multi-axion quintessence\footnote{The study in ref.~\cite{Katewongveerachart:2026ovj} focused on quintessence with $n=2$ axions, while generalizations to $n \gg1$ require  fair ensembles of UV complete models. The sampling algorithm developed in ref.~\cite{Yip:2025hon} can potentially provide such ensembles.} -- is also possible via the same methods, but we defer its analysis to future work.

\subsection{Generalized periodic potentials}

Our methods also apply naturally to generalized periodic potentials of the form
\begin{equation} \label{generalized V}
    V_{p} = \m^2 \f^2 \, \Bigl[ 1 - \cos \, \Bigl(\dfrac{\phi}{\f}\Bigr) \Bigr]^p,
\end{equation}
where $p > 1$ is an arbitrary power.
Such potentials play a prominent role in the context of early dark energy (EDE, \cite{Poulin:2018cxd, McDonough:2021pdg}); we will comment on this later.
In any case, one may also study them for DE phenomenology.
Evidently, we identify $V_1 \equiv V$, with $V$ given in eq.~(\ref{V}).
For simplicity, we still refer to $\m$ as the ``mass parameter''.
As detailed in app. \ref{app.: generalized bounds} -- see eq.~(\ref{eq.: generalized time-interval lower bound}) -- a simple generalization of the methods laid out before allows one to find the lower bound
\begin{equation} \label{generalized physical time-interval lower bound}
    \Delta t_{\In \fin} \geq \dfrac{2^{\frac{p-3}{2}} \Omega_{\m \f,\In}^{\frac{p-1}{2p}}}{p \sqrt{d-1} \sqrt{\epsilon_\fin}} \, \dfrac{\epsilon_\fin - \epsilon_\In}{\mu},
\end{equation}
where we defined $\smash{\Omega_{\m \f,\In} = 2^{1+p} \kappa_d^2 \, \m^2 \f^2 / [(d-1)(d-2) H_\In^2]}$ for brevity.
It is apparent that the value $p=1$ simplifies the bound significantly.
All larger values of $p$ enhance the minimal duration of the accelerating phase, coherently with the intuition that the potential is flatter.

\subsection{Bounds and cosmological observables}

We can express our bound in terms of other observables besides $\Delta t_{\In \fin}$, such as the density parameters $\Omega_A = \rho_A/\rho_H$, where $\smash{\rho_H = (d-1)(d-2) H^2/2\kappa_d^2}$, and the equation-of-state parameters $w_A$.
A model with matter, a cosmological constant and the field gives $\smash{\Omega_{\M} = \rho_{\M}/\rho_H}$, $\smash{\Omega_{\Lambda} = \Lambda/\rho_H}$, and $\smash{\Omega_{\phi} = 1 - \Omega_{\M} - \Omega_\Lambda}$, respectively, with the equation-of-state parameters $w_\M = 0$, $w_\Lambda = -1$, and $\smash{w_\phi = [\dot{\phi}^2/2-V(\phi)][\dot{\phi}^2/2+V(\phi)]}$.

For instance, let us consider a matter density parameter $\Omega_\alpha$, with constant $w_\alpha$.
Under the same assumption that has been employed so far -- i.e. $\xi_\In \leq \xi_\fin \leq (1+w)/2$, for a given $w$ --, a simple manipulation of eqs.~(\ref{z-equation}, \ref{h-equation}) leads to (see also corollary \ref{corollary: improved z upper bound})
\begin{equation} \label{physical-time: upper bound}
    \Bigl( \dfrac{\Omega_{\alpha, \In}}{\Omega_{\alpha, \fin}} \Bigr)^{\! \frac{1}{2} \frac{1}{\frac{(d-1)(1+w_\alpha)}{2 \epsilon_\fin} - 1}} - 1 \geq \epsilon_\fin H_\In \Delta t_{\In \fin}.
\end{equation}
Because $H$ is non-increasing, we may also write $H_\In \geq H_\fin$ on the right hand-side.
Additionally, we may as well work in terms of e-folds $\smash{N_{\In, \fin} = \ln \, (a_\fin/a_\In) = \int_{t_\In}^{t_\fin} \de t \, H(t)}$ between two events.
Then, we have $\smash{N_{\In, \fin} \geq H_\fin \Delta t_{\In \fin}}$.
One can then plug in an inequality for $\Delta t_{\In \fin}$ such as eq.~(\ref{refined physical time-interval lower bound}), eqs.~(\ref{axiverse AdS-models: physical time-interval lower bound}, \ref{axiverse dS-models: physical time-interval lower bound}), and eq.~(\ref{generalized physical time-interval lower bound}).

We may also get rid of angle dependencies via the identity $\smash{\epsilon = (d-1) (1 - \Omega_\Lambda + w_\phi \Omega_{\phi})/2}$, which allows us to write
\begin{equation} \label{cosine identity}
    \cos^2 \Bigl[\dfrac{\phi}{2 \f}\Bigr] = 1 - \dfrac{(d-1)(d-2) H^2}{8 \kappa_d^2 \f^2 \m^2} \, (1-w_{\phi}) \Omega_{\phi}.
\end{equation}

\section{Axion quintessence}

An obvious application of our bound is for models of cosmic acceleration, where it applies both to controlling numerical fits \cite{Toomey:2025xyo} and to understanding theoretical constraints.

In a minimal model of the late universe, we must include (baryonic and dark) matter and the DE component, respectively with energy densities $\smash{\rho_{\M}}$ and $\smash{\rho_\phi = \dot{\phi}^2/2 + V(\phi)}$, with $V(\phi)$ in eq.~(\ref{V}), and state parameters $\smash{w_{\M} = 0}$ and $\smash{w_\phi = [\dot{\phi}^2/2-V(\phi)][\dot{\phi}^2/2+V(\phi)]}$.
For generality, we also entertain the possibility of the presence of a cosmological constant $\Lambda$.

\subsection{Data fits and analytic predictions} \label{subsec.: data fits}
To exploit our bound for a theoretical understanding of axion quintessence models, we need to relate the potential parameters and the boundary conditions via the observed evolution between an initial time $t_\In$ and today, i.e. $t_\fin$.
In terms of $H$, $\Omega_{\M}$, $\Omega_\Lambda$ (where for definiteness we consider the AdS model with $\Omega_\Lambda < 0$), $\epsilon(\Omega_\M, w_\phi \Omega_\phi)$ and $w_\phi$, and in view of eqs.~(\ref{physical-time: upper bound}, \ref{cosine identity}), the bound in eq.~(\ref{AdS-models: time-interval lower bound}) may be expressed as
\begin{equation} \label{master bound}
    \Bigl( \dfrac{\Omega_{\M,\In}}{\Omega_{\M,0}} \Bigr)^{\!\frac{1}{2} \frac{1}{\frac{3}{2 \epsilon_\fin}-1}} - 1 \geq \dfrac{\dfrac{1}{2\sqrt{3}} \dfrac{\sqrt{\epsilon_\fin} \, [\epsilon_\fin - \epsilon_\In]}{\sqrt{1 - \Omega_{\Lambda,0}}}}{\sqrt{\dfrac{\m^2}{H_\fin^2} - \dfrac{3}{4} \, \dfrac{m_\p^2}{\f^2} \, (1-w_{\phi,0}) \Omega_{\phi,0}}}.
\end{equation}

\subsubsection{Axion DE and DESI DR2}

Let us consider the basic axion DE model, with $\Lambda = 0$.
Numerical analyses appeared in refs.~\cite{Urena-Lopez:2025rad, DESI:2025fii}, combining DESI DR1 and DR2 \cite{DESI:2024mwx, DESI:2025zpo, DESI:2025zgx}, respectively, with CMB data \cite{Planck:2019nip, Carron:2022eyg, ACT:2023kun} and one among further data sets:
\begin{enumerate*}[label=(\alph*)]
    \item \label{Pantheon+} Pantheon+ \cite{Scolnic:2021amr, Brout:2022vxf},
    \item \label{Union3} Union3 \cite{Rubin:2023jdq},
    \item \label{DES} and DES Y5 \cite{DES:2024jxu}.
\end{enumerate*}
This provides one representative fitting example that is useful to show an application of our bound.
An analysis of the basic axion DE model after DESI DR2 also appears in ref.~\cite{Lin:2025gne}, which we will comment on in sec. \ref{sec.: discussion}.

In ref.~\cite{Urena-Lopez:2025rad}, the outcomes of the analysis are as below (95\% confidence level).

\begin{center}
\begin{tabular}{c||ccc} 
    & \ref{Pantheon+} & \ref{Union3} & \ref{DES} \\ [0.5ex] 
    \hline\hline
    $w_{\phi,0} \vphantom{\dfrac{\big|}{\big|}}$ & $-0.92_{-0.05}^{+0.06}$ & $-0.75_{-0.18}^{+0.22}$ & $-0.84_{-0.08}^{+0.10}$ \\ 
    \hline
    $\Omega_{\M,0} \vphantom{\dfrac{\big|}{\big|}}$ & $0.314_{-0.010}^{+0.011}$ & $0.330_{-0.016}^{+0.018}$ & $0.322_{-0.013}^{+0.014}$ \\
    \hline
    $\dfrac{H_\fin}{10^{-33} \, \mathrm{eV}/\hbar} \vphantom{\dfrac{\big|}{\big|}}$ & $1.43_{-0.02}^{+0.02}$ & $1.39_{-0.04}^{+0.03}$ & $1.41_{-0.03}^{+0.02}$ \\
    \hline\hline
    $\log \, \Bigl(\dfrac{\m}{\mathrm{eV}/c^2}\Bigr) \vphantom{\dfrac{\big|}{\big|}}$ & $-32.69^{+0.18}_{-0.18}$ & $-32.48^{+0.20}_{-0.24}$ & $-32.58^{+0.20}_{-0.21}$ \\
    \hline
    $\log \, \Bigl( \dfrac{\f}{m_\p} \Bigr) \vphantom{\dfrac{\big|}{\big|}}$ & $-0.11^{+0.24}_{-0.22}$ & $-0.33^{+0.31}_{-0.23}$ & $-0.22^{+0.31}_{-0.26}$ \\ [1ex]
\end{tabular}
\end{center}

\noindent
Further data we can infer are as follows:
\begin{enumerate*}
    \item[\ref{Pantheon+}] $\epsilon_{0} = 0.55$ and $(\dot{\phi}^2_\fin/2)/V_\fin = 0.04$;
    \item[\ref{Union3}] $\epsilon_{0} = 0.75$ and $(\dot{\phi}^2_\fin/2)/V_\fin = 0.14$;
    \item[\ref{DES}] $\epsilon_{0} = 0.64$ and $(\dot{\phi}^2_\fin/2)/V_\fin = 0.09$.
\end{enumerate*}
As the endpoints are not in a regime of negligible kinetic energy, going beyond slow-roll-like approximations is necessary.
By plugging such data into eq.~(\ref{master bound}), we get
\begin{subequations}
    \begin{align}
        1.40 \, \Omega_{\M,\In}^{0.29} - 1 \geq - (0.32 + 0.51 \, w_{\phi,\In} \Omega_{\phi,\In}) \geq 0; \label{master bound: DESI+Pantheon+} \\
        1.73 \, \Omega_{\M,\In}^{0.50} - 1 \geq - (0.15 + 0.29 \, w_{\phi,\In} \Omega_{\phi,\In}) \geq 0; \label{master bound: DESI+Union3} \\
        1.53 \, \Omega_{\M,\In}^{0.38} - 1 \geq - (0.21 + 0.37 \, w_{\phi,\In} \Omega_{\phi,\In}) \geq 0. \label{master bound: DESI+DES}
    \end{align}
\end{subequations}
Such equations provide an understanding of the region in the $(w_\phi, \Omega_\M)$-phase space that are compatible with ending up at the present values.
The exclusion plots for the three data sets are represented in fig.~\ref{fig.: DESI bounds}.

\begin{figure}[h]
    \centering
    
    \begin{tikzpicture}[xscale=28,yscale=15,every node/.style={font=\normalsize},decoration={markings, 
    mark= at position 0.5 with {\arrow{stealth}}}]
    
    \begin{scope}
    
        \clip (0.20,-1) rectangle (0.51,-0.72);

        \node[cyan,left] at (0.314,-0.92){$(\Omega_{\M,0},w_{\phi,0})_{\mathrm{a}}$};
        \draw[name path = a1, domain=0.314:0.34, smooth, thick, variable=\x, cyan] plot ({\x}, {(1.331-2.7518*\x^(0.2922))/(1-\x)}) node[above right, black]{};
        \draw[name path = a2, domain=0.314:0.369, smooth, thick, variable=\x, cyan] plot ({\x}, {-0.631/(1-\x)}) node[above right, black]{};
        \tikzfillbetween[of = a1 and a2]{cyan, opacity=0.1};
        
        \node[orange,right] at (0.330,-0.75){$\;\; (\Omega_{\M,0},w_{\phi,0})_{\mathrm{b}}$};
        \draw[name path = b1, domain=0.330:0.36, smooth, thick, variable=\x, orange] plot ({\x}, {(2.937-5.955*\x^(0.4950))/(1-\x)}) node[above right, black]{};
        \draw[name path = b2, domain=0.330:0.498, smooth, thick, variable=\x, orange] plot ({\x}, {-0.503/(1-\x)}) node[above right, black]{};
        \tikzfillbetween[of = b1 and b2]{orange, opacity=0.1};

        \node[magenta,left] at (0.322,-0.84){$(\Omega_{\M,0},w_{\phi,0})_{\mathrm{c}}$};
        \draw[name path = c1, domain=0.322:0.352, smooth, thick, variable=\x, magenta] plot ({\x}, {(2.161-4.191*\x^(0.3779))/(1-\x)}) node[above right, black]{};
        \draw[name path = c2, domain=0.322:0.43, smooth, thick, variable=\x, magenta] plot ({\x}, {-0.570/(1-\x)}) node[above right, black]{};
        \tikzfillbetween[of = c1 and c2]{magenta, opacity=0.1};
    
    \end{scope}

    \draw[->] (0.29,-1) node[left]{$-1$} -- (0.51,-1) node[below]{$\Omega_{\M,\In}$};
    \draw[->] (0.30,-1.02) node[below]{$0.30$} -- (0.30,-0.74) node[left]{$w_{\phi,\In}$};

    \end{tikzpicture}
    
    \caption{
    The shaded regions in between the cyan, orange and magenta curves fulfill the bounds in eqs.~(\ref{master bound: DESI+Pantheon+}, \ref{master bound: DESI+Union3}, \ref{master bound: DESI+DES}), respectively.
    In the axion DE model, the dynamics cannot flow to the current values from other areas.
    In eqs.~(\ref{master bound: DESI+Pantheon+}, \ref{master bound: DESI+Union3}, \ref{master bound: DESI+DES}), the inequalities on the right hand-sides are enforcing the condition of a thawing model, i.e. $\epsilon_\In \leq \epsilon$. To derive our bound, however, only the conditions that $\epsilon_\In \leq \epsilon_\fin$ and $\epsilon \leq \epsilon_\fin$ were assumed. Dropping the additional requirement that $\epsilon_\In \leq \epsilon$ just removes the upper boundaries from the exclusion plots.
    }
    
    \label{fig.: DESI bounds}
\end{figure}
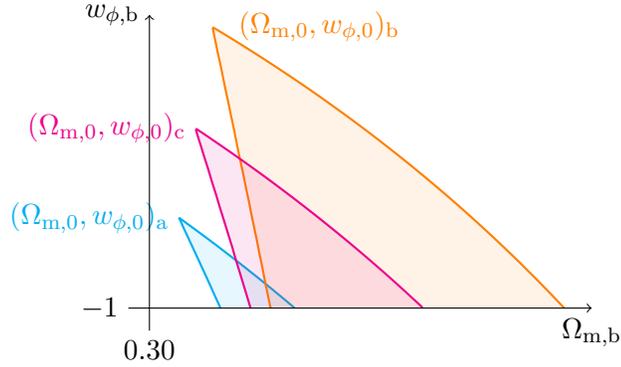

\subsubsection{AdS-shifted axion DE and DES Y6}
It is also instructive to discuss axion DE in the presence of a cosmological constant $\Lambda < 0$.
This scenario intersects with a growing body of interest \cite{Luu:2025fgw, Luu:2025dax}.
Notably,
in ref.~\cite{Luu:2025fgw},
the model was numerically analyzed in
light of the most recent DES Y6 data \cite{DES:2025upx} (see ref.~\cite{Ruchika:2020avj} for an analysis with different data sets).
Under the assumption that $\f=m_\p$, the best-fit values were found to be $\log \, \m c^2 / \mathrm{eV} = -32.533$, $w_{\phi,0} = -0.726$, $\Omega_{\phi,0} = 2.32$, $\Omega_{\Lambda,0}=-1.61$, and $H_\fin/(10^{-33} \, \mathrm{eV}/\hbar) = 1.437$.
With these best-fit values, our eq.~(\ref{master bound}) gives both
\begin{equation} \label{AdS master bound: DES OmegaM}
    2228.46 \Omega_{\M,\In}^{6.23} - 1 \geq 0.29 \, (-0.07 + \Omega_{\Lambda,\In} - w_{\phi,\In} \Omega_{\phi,\In}) \geq 0,
\end{equation}
and
\begin{equation} \label{AdS master bound: DES OmegaLambda}
    \dfrac{1.27}{\sqrt{-\Omega_{\Lambda,\In}}} - 1 \geq 0.29 \, (-0.07 + \Omega_{\Lambda,\In} - w_{\phi,\In} \Omega_{\phi,\In}) \geq 0.
\end{equation}
Clearly, the bounds now involve three unknowns, namely $w_{\phi, \In}$, $\Omega_{\M, \In}$ and $\Omega_{\Lambda, \In}$, but simple phase-space plots can still be drawn, e.g. by fixing a hypersurface with a fixed value of $\Omega_{\Lambda, \In}$ or $\Omega_{\M,\In}$.
An illustrative example is in fig.~\ref{fig.: DES bound}.

\begin{figure}[h]
    \centering
    
    \begin{tikzpicture}[xscale=150,yscale=8.5,every node/.style={font=\normalsize},decoration={markings, 
    mark= at position 0.5 with {\arrow{stealth}}}]

    \begin{scope}
    
        \clip (0.29,-1) rectangle (0.34,-0.45);
    
        \draw[name path = f1, domain=0.29:0.325, smooth, thick, variable=\x, violet!45!white, samples=10] plot ({\x}, {-0.726-23.667*(\x-0.29)-1189.154*(\x-0.29)^2}) node[above right, black]{};
        \draw[name path = f2, domain=0.29:0.325, smooth, thick, variable=\x, violet!45!white] plot ({\x}, {-0.726+7.85*(\x-0.29)}) node[above right, black]{};
        \tikzfillbetween[of = f1 and f2]{violet!45!white, opacity=0.1};
        
    \end{scope}
    
    \draw[densely dotted] (0.29,-1) node[below]{$0.29$} -- (0.29,-0.726) node[left]{$(\Omega_{\M,0},w_{\phi,0})$} -- (0.31,-0.726) node[right]{$-0.726$};

    \draw[->] (0.285,-1) -- (0.31,-1) node[above left]{$-1$} -- (0.33,-1) node[below]{$\Omega_{\M,\In}$};
    \draw[->] (0.31,-1.02) node[below]{$0.31$} -- (0.31,-0.54) node[left]{$w_{\phi,\In}$};

    \end{tikzpicture}
    
    \caption{The shaded region in violet (whose part on the right has been cut off for displaying purposes) fulfills the bound in eq.~(\ref{AdS master bound: DES OmegaM}); to draw this 2d slice of the full 3d plot, the value of $\Omega_{\Lambda,\In}$ has been chosen so as to saturate eq.~(\ref{AdS master bound: DES OmegaLambda}), meaning that $\smash{1.27/\sqrt{-\Omega_{\Lambda,\In}} - 1 = 0.29 \, (-0.07 + \Omega_{\Lambda,\In} - w_{\phi,\In} \Omega_{\phi,\In})}$.}
    
    \label{fig.: DES bound}
\end{figure}
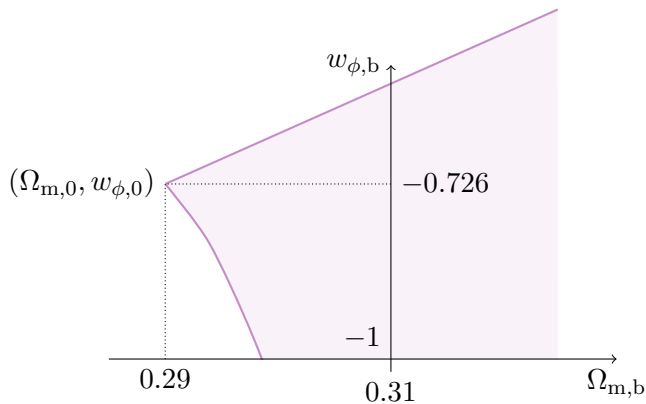

\section{Axionic weak-gravity conjecture and DE}

Our results provide a quantitative tool for assessing the tensions between theoretical expectations and observations.
For definiteness, we will focus on axions with a periodic potential, plus a non-positive cosmological constant $\Lambda \leq 0$ -- although in essence our conclusions are not affected otherwise.
From a higher-dimensional perspective, in the vacuum only the negative spacetime curvature of AdS can be supported by classical sources alone \cite{Maldacena:2000mw, Danielsson:2018ztv}.
In UV complete theories, dS constructions necessitate additional ingredients beyond classical sources (see e.g. ref.~\cite{McAllister:2025qwq} for a recent review on attempts to construct dS vacua in string theory).

To start, in a 4d toy model of the late universe, we assume that DE is represented by a single string axion rolling down its potential, relaxing the $\epsilon$-parameter from $\epsilon_\In$ to the current value $\epsilon_\fin > \epsilon_\In$ over $N_{\In 0}$ e-folds.
Then, based on eq.~(\ref{AdS-models: time-interval lower bound}) and eqs.~(\ref{physical-time: upper bound}, \ref{cosine identity}), we can formulate a bound in the $(\m,\f)$-parameter space as
\begin{equation} \label{axion quintessence: (m,f)-parameter space bound}
    \dfrac{\m^2}{H_\fin^2} \geq \dfrac{3}{2} \dfrac{m_\p^2}{\f^2} \Bigl( 1 - \dfrac{\epsilon_\fin}{3} - \Omega_{\Lambda,0} - \dfrac{\Omega_{\M,0}}{2} \Bigr) + \dfrac{1}{(1-\Omega_{\Lambda,0})} \dfrac{(\epsilon_\fin - \epsilon_\In)^2}{12 \epsilon_\fin N_{\In 0}^2}.
\end{equation}
This same relationship has been discussed above in view of data fits.
Below, we will assess its implications from a theoretical UV perspective.

In a UV-complete EFT of gravity, given an axion $\phi^a$ with decay constant $\f_a$, the AWGC posits that there must exist an instanton of instanton number $k$ such that the instanton action $S_a$ is bounded as
\begin{equation} \label{axionic WGC}
    \dfrac{\kappa_d \f_a}{k} S_a \leq c,
\end{equation}
where $c$ is an order-1 constant.
Evidence was found in refs.~\cite{Andriolo:2020lul,Andriolo:2022rxc} that this order-1 constant $c$ is set by the action-to-charge ratio of Euclidean wormholes.
In a 4d EFT, the axion mass term $\m_a$ is given by
\begin{equation} \label{axion mass}
    \m_a^2 = \dfrac{\Lambda_{\mathrm{UV}}^4 \, \e^{-S_a}}{(\f_a/k)^2},
\end{equation}
where $\Lambda_{\mathrm{UV}} \leq m_\p$ is a UV scale related to the mechanism generating the axion potential \cite{Svrcek:2006yi}, and it can be generally expressed as $\Lambda_{\mathrm{UV}}^4 = m_{\mathrm{S}}^{n_{\mathrm{S}}} m_s^{n_s} m_\p^{4-n_{\mathrm{S}}-n_s}$, for two non-negative powers $n_{\mathrm{S}}, n_s \geq 0$, where $m_{\mathrm{S}}$ is the SUSY-breaking scale and $m_s$ is the string scale \cite{Blumenhagen:2009qh, Arvanitaki:2009fg, Hui:2016ltb, Reece:2025thc}.

If we identify our axion as $\phi = \phi^a$, with $\m = \m_a$ and $\f = \f_a/k$, the statement of eq.~(\ref{axionic WGC}) translates immediately into the bound $\smash{\m^2 \geq \Lambda_{\mathrm{UV}}^4/\f^2 \, \e^{-c m_\p/\f}}$ on the $(\m,\f)$-parameter space.
For convenience, we express the latter as
\begin{equation} \label{axionic WGC: (m,f)-parameter space bound}
    \dfrac{\m^2}{H_\fin^2} \geq \dfrac{\Lambda_{\mathrm{UV}}^4}{H_\fin^2 m_\p^2} \dfrac{m_\p^2}{\f^2} \, \e^{-c \frac{m_\p}{\f}}.
\end{equation}
Due to the tremendous hierarchy between the observed Hubble energy density and the Planck scale, in principle the ratio $\m/H_0$ appears to be gigantic.
It is only thanks to the exponential dependence in the factor $\smash{\e^{-c m_\p/\f}}$ that, potentially, tiny values of $\f/m_\p$ can allow for values $m/H_\fin$ that are not several orders of magnitude larger than unity.
For instance, the parameters $\m$ and $\f$ from data fits for (AdS-shifted) axion DE \cite{Urena-Lopez:2025rad, DESI:2025fii, Luu:2025fgw} discussed in subsec. \ref{subsec.: data fits} violate this bound by several orders of magnitude.\footnote{Indeed, because $\smash{H_\fin/m_\p = O(10^{-60})}$ \cite{Planck:2018vyg}, with axion decay constant such that $\f/m_\p = O(1/10)$ and assuming $c=O(1)$, eq.~(\ref{axionic WGC: (m,f)-parameter space bound}) would at best require $\smash{m/H_\fin \geq 10^{60} \, O(\Lambda_{\mathrm{UV}}/m_\p)^2}$.}
Indeed, the AWGC requires subplanckian values of $\f$.
However, it is not so restrictive on $\m$ if $\f$ is very small.
As it turns out, our dynamical bound quantifies a restrictive lower bound on $\m$ at small $\f$ instead.
The combination of our dynamical bound with the AWGC has important consequences.

\subsection{A lower bound on the axion mass}

Noticeably, the dynamical bound in eq.~(\ref{axion quintessence: (m,f)-parameter space bound}) -- after plugging in the Hubble parameter -- and the AWGC statement in eq.~(\ref{axionic WGC: (m,f)-parameter space bound}) rule out parts of parameter space in a highly non-redundant way: only a specific intersection area is compatible with both.
In particular, their combination singles out a lower bound on the axion mass, as exemplified in fig.~\ref{fig.: (m,f)-parameter space bound}.
This is manifest in log-variables: eqs.~(\ref{axion quintessence: (m,f)-parameter space bound}, \ref{axionic WGC: (m,f)-parameter space bound}) can indeed be expressed respectively as
\begin{align*}
    \ln \, \dfrac{\m}{H_\fin} & \geq - \ln \, \dfrac{\f}{m_\p} + \dfrac{1}{2} \, \ln \, \biggl[ \dfrac{3}{2} \Bigl( 1 - \dfrac{\epsilon_\fin}{3} - \Omega_{\Lambda,0} - \dfrac{\Omega_{\M,0}}{2} \Bigr) + \dfrac{\e^{2 \ln \f/m_\p}}{(1-\Omega_{\Lambda,0})} \dfrac{(\epsilon_\fin - \epsilon_\In)^2}{12 \epsilon_\fin N_{\In 0}^2} \biggr], \\
    \ln \, \dfrac{\m}{H_\fin} & \geq - \ln \dfrac{\f}{m_\p} + \dfrac{1}{2} \, \ln \dfrac{\Lambda_{\mathrm{UV}}^4}{H_\fin^2 m_\p^2} - \dfrac{1}{2} \, \dfrac{c}{\e^{\ln \f/m_\p}}
\end{align*}
If $\smash{f/m_\p < 1}$, and in a regime where the only large parameter is $\smash{\Lambda_{\mathrm{UV}}^4/H_\fin^2 m_\p^2 \gg 1}$, at leading order we can ignore all terms depending on the $\epsilon$-parameters, e-fold numbers and density parameters.
Then, we see that two boundary curves in $(\ln \m/H_\fin, \ln \f/m_\p)$-space intersect at the point such that $\smash{\ln \Lambda_{\mathrm{UV}}^4/H_\fin^2 m_\p^2 \simeq c \, m_\p/\f}$.
So, in practice, this means that we find a minimum mass
\begin{equation} \label{AWGC: mass lower bound}
    \m_{\min} = \dfrac{H_\fin}{c} \, O \Bigl[ \ln \, \Bigl( \dfrac{\Lambda_{\mathrm{UV}}^4}{H_\fin^2 m_\p^2} \Bigr) \Bigr].
\end{equation}
With the observed value $\smash{H_\fin/m_\p = O(10^{-60})}$, we get $\smash{\m_{\min} = (H_\fin/c) \, O [ 276 - 4 \, \ln (m_\p/\Lambda_{\mathrm{UV}})]}$, which manifests the dependencies on $c$ and $\smash{\Lambda_{\mathrm{UV}}}$.
For instance, $\smash{\Lambda_{\mathrm{UV}} = 10^{-10} \, m_\p}$ still implies $\smash{\m_{\min}/H_\fin = O(10^2)/c}$.
This scaling is in place as long as $\smash{\Lambda_{\mathrm{UV}} \gg \e^{-276/4} m_\p = O(10^{-30}) m_\p}$.
A more careful treatment allows one to take care of all the dependencies; most notably, larger absolute values of the AdS shift $\smash{\ab \Lambda \ab}$ raise the minimum allowed mass.

The physical origin of this lower bound resides in the competition between two effects.
On the one hand, generating the observed DE scale from a much larger UV scale requires a strongly suppressed instanton contribution, as per eq.~(\ref{axion mass}); through the AWGC, in eq.~(\ref{axionic WGC}), this favors a small decay constant.
On the other hand, once the axion potential is required to carry a cosmologically relevant energy density, decreasing the decay constant $\f$ forces the mass $\m$ upwards, as apparent from eq.~(\ref{axion quintessence: (m,f)-parameter space bound}).
Indeed, as explained above, the minimum in fig.~\ref{fig.: (m,f)-parameter space bound} occurs where these two tendencies balance.
This fixes $\smash{\m/H_0}$ to scale only logarithmically with the UV-to-IR hierarchy, yielding a mildly UV-sensitive ratio that nevertheless remains parametrically larger than unity.

We stress that eq.~(\ref{AWGC: mass lower bound}) does not apply indiscriminately to every axion in the axiverse.
It applies to an axion that sources the late-time DE evolution considered above, provided that the instanton controlling its potential is subject to the AWGC bound in eq.~(\ref{axionic WGC}).
If its mass were below $\smash{\m_{\min}}$, no value of the decay constant could satisfy both requirements simultaneously: sufficiently large $\f$ would violate the AWGC constraint, whereas sufficiently small $\f$ would violate the dynamical bound required to reproduce the assumed cosmological evolution.
Spectator axions that do not participate in the DE dynamics are not constrained by eq.~(\ref{AWGC: mass lower bound}).

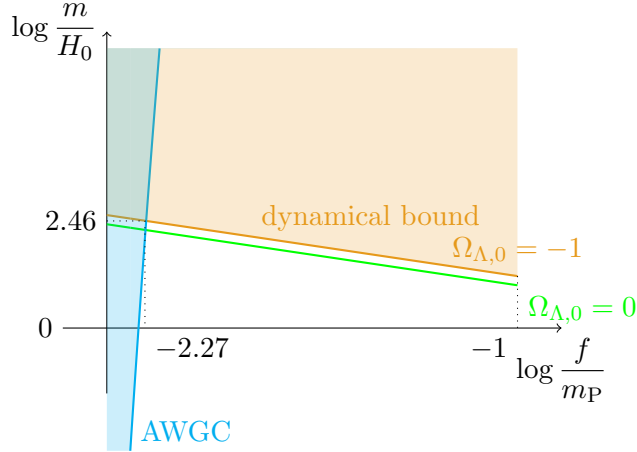
\begin{figure}[h]
    \centering
    
    \begin{tikzpicture}[xscale=3.90,yscale=0.58,every node/.style={font=\normalsize}]

    \draw[->] (-2.55,0) -- (-0.85,0) node[below]{$\log \dfrac{f}{m_\p}$};
    \draw[->] (-2.4,-1.5) -- (-2.4,6.8) node[left]{$\log \dfrac{m}{H_\fin}$};

    \draw[name path = u1, domain=-2.4:-2.32, opacity=0,variable=\x] plot ({\x}, {6.42}) node[above right, black]{};
    \draw[name path = l, domain=-2.4:-2.32, opacity=0,variable=\x] plot ({\x}, {-2.816}) node[above right, black]{};
    \draw[name path = u2, domain=-2.32:-1, opacity=0,variable=\x] plot ({\x}, {6.42}) node[above right, black]{};
    \draw[name path = AdSbound1, domain=-2.4:-2.32, smooth, thick, variable=\x, purple!40!yellow] plot ({\x}, {0.1916-\x});
    \draw[name path = AdSbound2, domain=-2.32:-1, smooth, thick, variable=\x, purple!40!yellow] plot ({\x}, {0.1916-\x}) node[above]{$\Omega_{\Lambda,\fin}=-1$};
    \draw[name path = Mbound1, domain=-2.4:-2.32, smooth, thick, variable=\x, green] plot ({\x}, {-0.019-\x});
    \draw[name path = Mbound2, domain=-2.32:-1, smooth, thick, variable=\x, green] plot ({\x}, {-0.019-\x}) node[below right]{$\Omega_{\Lambda,\fin}=0$};
    \draw[name path = AWGC, domain=-2.32:-2.22, smooth, thick, variable=\x, cyan] plot ({\x}, {40.234 - 0.217147*exp(-2.3026*\x) - \x}) node[above right, black]{};

    \tikzfillbetween[of = u1 and AdSbound1]{purple!40!yellow, opacity=0.2};
    \tikzfillbetween[of = u2 and AdSbound2]{purple!40!yellow, opacity=0.2};
    \tikzfillbetween[of = u1 and l]{cyan, opacity=0.2};
    \tikzfillbetween[of = u2 and AWGC]{cyan, opacity=0.2};

    \node[left] at (-2.55,0){$0$};
    \draw[dotted] (-2.4, 2.457) node[left]{$2.46$} -- (-2.27,2.457) -- (-2.27,0) node[below right] {$-2.27$};
    \draw[dotted] (-1,1.192) -- (-1,0) node[below left] {$-1$};
    \node[purple!40!yellow, above] at (-1.5,2){dynamical bound};
    \node[cyan, above right] at (-2.32,-2.816){AWGC};

    \end{tikzpicture}
    
    \caption{An example of the region allowed in the $(\m,\f)$-parameter space by the combination of eqs.~(\ref{axion quintessence: (m,f)-parameter space bound}, \ref{axionic WGC: (m,f)-parameter space bound}), with $H_\fin=1.4 \cdot 10^{-33} \, \mathrm{eV}$, $\smash{\Omega_{\M,\fin}=1/3}$, and $N_{\In \fin} = 1$, $\epsilon_\fin=2/3$ and $\epsilon_\In=0$ (only the dependence on $H_\fin$ is critical, at small $\f$). The different values $\smash{\Omega_{\Lambda,\fin}=0,-1}$, as expected, show that an AdS shift tightens the bound on the axion mass. For the plot, the values $\Lambda_{\mathrm{UV}} = 10^{-10} \, m_\p$ and $c=1$ have been chosen. The logarithmic plot is in base-$10$ to simplify comparison with data.}
    
    \label{fig.: (m,f)-parameter space bound}
\end{figure}

There exist possible loopholes around our AWGC bound in eq.~(\ref{AWGC: mass lower bound}).
These loopholes have been discussed extensively in the literature, in the context of early cosmic inflation: see e.g. refs.~\cite{delaFuente:2014aca, Rudelius:2015xta, Brown:2015iha, Brown:2015lia, Montero:2015ofa, Hebecker:2015rya} and the recent summary in ref.~\cite{Rudelius:2022gyu}.
For instance, if the instanton providing the dominant contribution to the axion potential is not the AWGC-fulfilling one, practically we would replace $c$ by an effective constant up to $c k$.
However, the understood explicit cases fulfill the AWGC with $k \leq 3$ \cite{Heidenreich:2016aqi, Lee:2019tst}, making this -- as of now -- not a viable way to achieve axion masses $\m = O(H_\fin)$.
If $n$ axions $\phi^a$ are rolling, eq.~(\ref{axiverse AdS-models: physical time-interval lower bound}) leads us to consider $\smash{M_n = \sum_{a=1}^n m_a \, C_{\In \fin}^a}$, where we are ignoring heavier axions as they began to oscillate at earlier times than $t_\In$.
Then, individual axion masses $\m_a = O(H_\fin)$ with subplanckian decay constants $\f_a \ll m_\p$ are compatible with our bound as long as their distribution is dense enough, meaning that there are $n=O(10^2)$ axions around the same mass, if $\smash{C_{\In \fin}^a = O(1)}$.
Moreover, unlike cosmic inflation, which necessitates an effective super-Planckian decay constant in the multi-axion field space, our bound on multi-axion DE models is not a priori in contradiction with the convex-hull formulation of the AWGC \cite{Cheung:2014vva, Rudelius:2015xta, Montero:2015ofa, Brown:2015iha, Brown:2015lia, Heidenreich:2015wga}.
At surface level, the DE scenario is indeed inherently different from that of early inflation, where an effective large decay constant such as $\f_{\mathrm{eff}} \sim \sqrt{n} \f$ would be needed to achieve a long-lived epoch of cosmic acceleration: after all, the decay constant in the axion DE model does not face the same requirements as that of natural inflation, where the need of a long-lived epoch of slow-roll inflation requires $- m_\p^2 V''/V \ab_{\phi/\f=\pi} = m_\p^2/2\f^2 \ll 1$.
If $n$ axions have a similar mass $\m_a \simeq \m$ and the AWGC-fulfilling decay constant is $\f_a \simeq \f$, then the bound in eq.~(\ref{AWGC: mass lower bound}) would simply imply $n \m_{\min}/H_\fin = O(10^2)/c$, allowing for $\m = O(H_\fin)$ if $n=O(10^2)$.
In any case, this loophole does not offer a clear exit either because another problem emerges: the initial conditions to fine-tune increase alongside $n$.
Related tensions implied by the trans-Planckian censorship conjecture \cite{Bedroya:2019snp} for multi-field axion DE models have been discussed in ref.~\cite{Shlivko:2023wkx}.
An alternative loophole relies on engineering a scenario where the axion mass is controlled by poly-instanton terms \cite{Blumenhagen:2008ji}, such as in the model proposed by ref.~\cite{Cicoli:2024yqh}.
In this case, the axion mass acquires a second exponential factor depending on a second decay constant, circumventing the bound in eq. (\ref{AWGC: mass lower bound}).

\subsection{Heuristics} \label{ssec.: heuristics}

The key conclusion from eq.~(\ref{AWGC: mass lower bound}) is that the combination of our bound and AWGC suggests axion masses of order $\m \geq O(10^2) H_\fin/c$.
There is only a mild logarithmic dependence on $\Lambda_{\mathrm{UV}}$.
Although the dependence on $c$ may lower $m_{\min}$ more noticeably, it is still not enough to allow for $m/H_\fin = O(1)$ as long as $c<O(100)$.
Let us interpret this condition heuristically.

A common estimate is that Hubble friction generically freezes the axion as long as
$H \gtrsim O(1) \m$, which motivates the expectation of $\m = O(1) H$ at the onset of oscillations.
A degree of fine tuning of the initial conditions is needed to ensure that the field remains close enough to the hilltop when the DE epoch begins, for $\f/m_\p = O(1)$.
This picture, however, needs not be parametrically unchanged if $\f/m_\p \ll 1$.
Without loss of generality for the current argument, we focus on the case where $\Lambda = 0$.
Based on the field range, we may estimate the scale of the field to be of order $\smash{\ab \phi \ab \sim \f}$; over a Hubble timescale, we can estimate its derivative as $\smash{\ab \dot{\phi} \ab \sim \f H}$.
Away from the minimum, the potential is of order $\smash{V \sim \m^2 \f^2}$, with a derivative $\smash{\ab V' \ab \sim \m^2 \f \, \ab \sin \, (\phi/f) \ab}$.
By writing $\f = \alpha m_\p$, we may thus express the magnitudes involved in the Friedmann equation, eq.~(\ref{Friedmann eq.}), as
\begin{equation} \label{heuristic Friedmann eq.}
    H^2 \sim \alpha^2 \biggl[ H^2 + \m^2 + \dfrac{\sum_i \rho_i}{\alpha^2 m_\p^2} \biggr],
\end{equation}
where at the epoch when the axion potential becomes cosmologically relevant, we assume that the additional fluid energy density is not parametrically dominant.
At the same time, in eq.~(\ref{FRW-KG eq.}), Hubble friction and the force compete as
\begin{equation} \label{heuristic force estimation}
    - 3 H \ab \dot{\phi} \ab + \ab V' \ab \sim \alpha m_\p \, \Bigl[- H^2 + \m^2 \, \Bigl\ab \sin \, \Bigl( \dfrac{\phi}{\f} \Bigr) \Bigr\ab \Bigr].
\end{equation}
If $\alpha = O(1)$, eqs.~(\ref{heuristic Friedmann eq.}, \ref{heuristic force estimation}) suggest that Hubble friction freezes the axion as long as
$H \gtrsim O(1) \m$.
This can in principle accommodate DE today, with $\m = O(1) H_\fin$, provided the hilltop displacement is sufficiently small.\footnote{We may discuss a qualitative comparison with the standard picture for $\smash{f = O(m_\p)}$ and $\smash{\m = O(H_0)}$; see e.g. refs. \cite{Kaloper:2005aj, Dutta:2008qn, Reig:2021ipa}. For instance, ref.~\cite{Kaloper:2005aj} analyzes the hilltop instability in an approximately dS regime, finding a displacement scaling as $\smash{\Delta \phi \simeq A \, \e^{\m t}}$. If the preceding evolution is still partly matter dominated, $H$ is expected to vary more appreciably, especially near the onset and again once the axion kinetic energy becomes relevant. Restricting nevertheless to a quasi-dS epoch, over a Hubble time, with $\smash{H \sim \m \f / m_\p}$, one may therefore estimate $\smash{\Delta \phi \sim A \m / H \sim A m_\p/\f}$. If we further take $\smash{\dot{\phi} \sim f H}$, we find $\smash{A \sim \f^2/m_\p}$, and hence $\smash{\Delta \phi \sim \f}$, in qualitative agreement with an order-$\f$ excursion towards the inflection region.}
However, if $\alpha \ll 1$, the scenario is qualitatively different: once the axion potential becomes cosmologically relevant, the destabilizing force term is parametrically larger for a generic misalignment.
This restricts parametrically more tightly the initial conditions compatible with a frozen field at and past the time at which $H \simeq O(1) m$.
In particular, when $H^2 \sim \alpha^2 \m^2$ -- i.e. when the critical Hubble energy density becomes comparable to the hilltop height, and the kinetic energy is still small, by eq.~(\ref{heuristic Friedmann eq.}) --, the force in eq.~(\ref{heuristic force estimation}) does not parametrically exceed the characteristic Hubble-damping term only as long as $\ab \sin \, (\phi/\f) \ab \lesssim H^2/m^2 \sim \alpha^2$.
This is consistent with the saturation of our bound in eq.~(\ref{axion quintessence: (m,f)-parameter space bound}), by which $\m^2/H_\fin^2 = O(m_\p/\f)^2 = O(1/\alpha^2)$.

\subsection{Larger axion masses and smaller probabilities}

Of course, resorting to an axion living around the hilltop runs into the fine-tuning problem of quantum fluctuations of order $\smash{\delta \phi = H/2\pi}$ during accelerated expansion.
Ways out have been proposed that rely on an axion population whose masses span log-uniformly several orders of magnitude \cite{Kamionkowski:2014zda, Emami:2016mrt}; see also ref.~\cite{Cicoli:2021skd}.

Let us revisit the argument in refs.~\cite{Kamionkowski:2014zda, Emami:2016mrt}, taking $\alpha = O(1/100)$ below.\footnote{We emphasize that the arguments in refs.~\cite{Kamionkowski:2014zda, Emami:2016mrt} were devised for $\alpha = O(1/10)$. As motivated in ssec. \ref{ssec.: heuristics}, our goal is now to show how a change in the order of magnitude of $\alpha$ modifies the parametric probability behavior.}
An optimistic condition for the onset of axion DE is for the initial misalignment to be such that $\smash{\epsilon_V \leq 1}$, solved for $\smash{\ab \pi - \phi_\In / \f \ab \leq 2 \sqrt{2} \alpha}$.
Assuming a uniform distribution of initial conditions, the probability to fall in this interval is $p_1(\alpha) = 2 \sqrt{2} \alpha/\pi$.
For concreteness, let us consider a theory with an axion population $\phi^a$ whose masses $m_a$ span several orders of magnitude, with a label $a$ such that $m_a > m_{a+1}$ for the $a$-th and $(a+1)$-th axions.
Then, for a given label $a_\fin$, the probability of having the $a_\fin-1$ heaviest axions too far from the hilltop and at the same time finding the $a_\fin$-th axion close enough is $\smash{P_1(\alpha,a_\fin) = p_1(\alpha) [1 - p_1(\alpha)]^{a_\fin-1}}$, scaling like
\begin{equation}
    P_1(\alpha,a_\fin) = O[\alpha (1-\alpha)^{a_\fin-1}].
\end{equation}
As an example, we can consider non-degenerate masses of the form in eq.~(\ref{axion mass}), for $k=1$ and with $S_a = 10 a$ and $\smash{\Lambda_{\mathrm{UV}} = 10^{-10} \, m_\p}$.
In this case, one finds $\smash{m_{19} = O(H_\fin)}$ (meaning that the instanton action $S_a$ that generates the axion mass has $a=19$).
The probability of this scenario to induce DE today is $P_1=O(10^{-2})$, which is remarkably higher than one may naively have expected.

However, with a largely subplanckian decay constant, the heuristic argument above points to a different scaling.
As we argued, we require $\smash{\ab \sin \, (\phi_\In/\f) \ab \lesssim \alpha^2}$, i.e. $\smash{\ab \pi - \phi_\In / \f \ab \lesssim \alpha^2}$, as a pre-condition for the onset of a DE epoch.
This implies a quadratic probability $p_2(\alpha) = O(\alpha^2)$, rather than a linear one, and
\begin{equation} \label{probability}
    P_2(\alpha,a_\fin) = O[\alpha^2 (1-\alpha^2)^{a_\fin-1}].
\end{equation}
In the same setup as above, we find $\smash{m_{18} = O(10^{3}) H_\fin}$, leading to a probability $P_2 = O(10^{-4})$ of finding DE today within this line of reasoning.

\section{Discussion} \label{sec.: discussion}

In this paper, we study analytic solutions to models of the current universe where DE is represented by a periodic field rolling down its potential, gravitationally coupled to additional perfect cosmological fluids such as baryonic and dark matter.
A simple example of this scenario is a single axion-like particle with a cosine potential, but our results can apply equally well to potentials with arbitrary powers of the periodic functions and with additional contributions from a cosmological constant of either sign.

Focusing on transient solutions, we 
are able to derive a relationship between the field couplings and the boundary conditions, including the values of the Hubble parameter and the density parameters associated with the various contributions to the total energy density, as the universe relaxes its acceleration rate toward its present value.
Notably, our analysis does not rely on specific initial conditions, aside from the mild assumption that the current acceleration rate is lower than in the past.
To illustrate the utility of our results, we construct exclusion plots in the parameter space defined by the DE equation-of-state parameter and the matter density parameter, identifying regions consistent with present-day observations within both axion and AdS-shifted axion DE models.

Our analytic bound, when combined with quantum gravity expectations, suggests an axion mass that is somewhat in tension with the value suggested by observations.
In particular, along with sub-Planckian axion decay constants implied by the AWGC, the combination of our bound and of the AWGC points to an axion mass scale of order $m = O(10^2) H_0/c$, with $c=O(1)$, which is roughly two orders of magnitude larger than the estimate $m = O(1) H_0$ for the quintessence field, appearing in several phenomenological axion-DE fits \cite{Luu:2025fgw, Urena-Lopez:2025rad}.
This highlights a challenge in embedding phenomenological axion DE models into UV-complete theories.
It should be stressed, however, that the observational preference for $m$ is sensitive to the treatment of the initial conditions and priors.
In particular, under an alternative sampling strategy, ref. \cite{Lin:2025gne} finds a possible preference for a relatively large axion mass, of order $m = O(10) H_0$, which reduces the gap with the outcome of our discussion.
We have also discussed the tuning of initial conditions in this region of parameter space.
Heuristically, the small-$\f$/large-$\m$ regime suggests that the hilltop window required for late-time acceleration is more finely tuned than in the case of only mildly sub-Planckian $\f$: the tuning does not merely worsen because $\f$ is smaller, but also through an additional parametric suppression.

The dynamical-system framework we have developed is highly adaptable and it can accommodate additional potentials and couplings.
A natural extension involves promoting the saxion that is responsible for setting the axion decay constant to a fully dynamical field \cite{Cicoli:2020cfj}, as well as incorporating negative spatial curvature \cite{Andriot:2024jsh}.
Furthermore, the phantom crossing in the DE sector indicated by the DESI data can be consistently explained without violating the null energy condition via a coupling of DE to dark matter \cite{Das:2005yj}.
These kinds of models are in fact motivated from a UV perspective and show agreement with data \cite{Agrawal:2019dlm, Bedroya:2025fwh}.
In light of the machinery we have worked out, an analytic treatment of transient solutions in these more sophisticated scenarios becomes feasible, including multi-field settings
\cite{Shiu:2024sbe, Shiu:2025ycw}.

Ultra-light axions with periodic potentials of the form in eq.~(\ref{generic V}) are also leading candidates to resolve the $H_\fin$-tension in EDE models, with the best fit to data selecting $p=3$ and $\f/m_\p = O(1/10)$ \cite{Poulin:2018cxd, McDonough:2021pdg} (see however refs.~\cite{Rudelius:2022gyu, Cicoli:2023qri} for an illustration of the tensions in finding such values in UV-complete theories).
To a first approximation, we may fix the boundary conditions at a final time $\smash{t_\fin = t_{\rec}}$ as $\smash{(\Omega_{\M, \rec}, \Omega_{\rad, \rec}) = (0.76,0.24)}$, at redshift $\smash{z_{\rec} = 1.1 \cdot 10^3}$, with effectively $\smash{\Omega_{\phi, \rec} \simeq 0}$, which corresponds to $\smash{\epsilon_{\rec} = 1.62}$.
Although this means that the bound derived above cannot be applied directly, the underlying framework and methods that we have discussed remains fundamentally applicable.
In particular, one can formulate the axion-matter-radiation system in the same autonomous variables, derive differential inequalities controlling the growth of the axion kinetic energy and the dilution of the background fluids, and then translate them into restrictions on the duration and magnitude of the EDE epoch. The main modification consists in adapting the step that presently assumes $\smash{\epsilon \leq 3/2}$ to the radiation-matter background around recombination.
This might provide analytic constraints on the axion parameters and initial displacement that complement the existing numerical exploration of EDE models and the $H_0$ tension \cite{DiValentino:2021izs}.

\acknowledgments
We would like to thank Mario Reig and Philip Sørensen for useful conversations.
GS is supported in part by the DOE grant DE-SC0017647.
FT is funded by the European Union -- grant NextGenerationEU/PNRR mission 4.1: CUP C93C24004950006.
HVT is supported in part by NSF grant DMS-2348305.

\appendix

\section{Alternative coordinates for the dynamical system} \label{app: alternative coordinates for the dynamical system}

In complex coordinates, exponential and trigonometric functions are indistinguishable.
This reveals a mathematical connection between periodic and exponential runaway potentials which we elucidate below.
Different coordinate choices present in earlier literature are also discussed.

\subsection{Complex coordinates}

Let us consider the complex potential
\begin{equation} \label{complex V}
    V_{\mathbb{C}} = \Lambda_\fin - \dfrac{\Lambda_+}{2} \, \e^{\I \kappa_d \gamma_+ \phi} - \dfrac{\Lambda_-}{2} \, \e^{-\I \kappa_d \gamma_- \phi}.
\end{equation}
A classical physical system requires a real classical action, but for generality let $\gamma_\pm$ and $\Lambda_\fin, \Lambda_\pm$ be unrelated to each other.
One may uncover structures in the dynamical system for an over-rich problem, i.e. a complex potential $V_{\mathbb{C}}$, and then specify which real potential one is considering.
For instance:
\begin{itemize}
    \item for $\Lambda_\fin = \Lambda_\pm = \m^2 \f^2$ and $\gamma_\pm = \gamma$, the potential reduces to eq.~(\ref{V});
    \item for $\Lambda_\fin = 0$, $\Lambda_\pm = - \Lambda$ and $\gamma_\pm = \pm \I \gamma$, the potential reduces to a simple exponential potential $V_{\exp} = \Lambda \, \e^{- \kappa_d \gamma \phi}$.
\end{itemize}
Below, we define a dynamical system associated to the complex potential in eq.~(\ref{complex V}).
Later on, we will study solutions for parameter choices that make the potential real.

Let
\begin{subequations}
\begin{align}
    x & = \dfrac{\kappa_d}{\sqrt{d-1} \sqrt{d-2}} \, \dfrac{\dot{\phi}}{H}, \label{complex DS - x} \\
    y_\pm & = \pm \I \dfrac{\kappa_d \sqrt{\Lambda_\pm}}{\sqrt{d-1} \sqrt{d-2}} \, \dfrac{1}{H} \, \e^{\pm \frac{\I}{2} \kappa_d \gamma_\pm \phi}, \label{complex DS - y} \\
    z & = \dfrac{\kappa_d \sqrt{2}}{\sqrt{d-1} \sqrt{d-2}} \, \dfrac{\sqrt{\rho}}{H}, \label{complex DS - z}
\end{align}
\end{subequations}
noticing that the functions $x$ and $z$ are real, with instead complex-valued functions $y_\pm$, and
\begin{subequations}
\begin{align}
    h & = (d-1) H, \label{complex DS - xf} \\[0.85ex]
    c_\pm & = \dfrac{1}{2} \dfrac{\sqrt{d-2}}{\sqrt{d-1}} \, \gamma_\pm, \label{complex DS - c} \\
    l_\fin & = \dfrac{\sqrt{d-1}}{\sqrt{d-2}} \sqrt{2 \smash{\kappa_d^2} \Lambda_\fin}. \label{complex DS - l}
\end{align}
\end{subequations}
Then, the cosmological problem in eqs.~(\ref{FRW-KG eq.}, \ref{continuity eq.}, \ref{Friedmann eq.}) is equivalent to the equations
\begin{subequations}
\begin{align}
    \dot{x} & = \biggl[ -x - \I \, \bigl[ c_+ (y_+)^2 - c_- (y_-)^2 \bigr] + x \, \Bigl[ (x)^2 + \dfrac{1+w}{2} \, (z)^2 \Bigr] \biggr] \, h, \label{complex DS - x-equation} \\
    \dot{y}_\pm & = \biggl[ (x)^2 + \dfrac{1+w}{2} \, (z)^2 \pm \I c_\pm x \biggr] \, y_\pm h, \label{complex DS - y-equation} \\[0.45ex]
    \dot{z} & = \biggl[ - \dfrac{1+w}{2} + (x)^2 + \dfrac{1+w}{2} \, (z)^2 \biggr] \, z h, \label{complex DS - z-equation} \\
    \dot{h} & = - \biggl[ (x)^2 + \dfrac{1+w}{2} \, (z)^2 \biggr] h^2, \label{complex DS - h-equation}
\end{align}
\end{subequations}
jointly with the condition
\begin{equation}
    (x)^2 + (y_+)^2 + (y_-)^2 + (z)^2 = 1 - \dfrac{l_\fin^2}{h^2}. \label{complex DS - sphere condition}
\end{equation}

A way to check the correctness of eqs.~(\ref{complex DS - x-equation}, \ref{complex DS - y-equation}, \ref{complex DS - z-equation}, \ref{complex DS - h-equation}, \ref{complex DS - sphere condition}) is to notice that the potential in eq.~(\ref{complex V}) reduces to a simple exponential potential, as already mentioned, for $\Lambda_\fin \mapsto 0$, $\Lambda_\pm \mapsto - \Lambda$ and $\gamma_\pm \mapsto \pm \I \gamma$.
Through the identifications $l_\fin \mapsto 0$, $c_\pm \mapsto \pm \I c$ and $y_\pm \mapsto \mp y/\sqrt{2}$, one finds \cite{Copeland:1997et}
\begin{subequations}
\begin{align}
    \dot{x} & = \biggl[ -x + c \, (y)^2 + x \, \Bigl[ (x)^2 + \dfrac{1+w}{2} \, (z)^2 \Bigr] \biggr] \, h, \label{exponential-potential x-equation - complex notation} \\
    \dot{y} & = \biggl[ (x)^2 + \dfrac{1+w}{2} \, (z)^2 - c x \biggr] \, y h, \label{exponential-potential y-equation - complex notation} \\[0.45ex]
    \dot{z} & = \biggl[ - \dfrac{1+w}{2} + (x)^2 + \dfrac{1+w}{2} \, (z)^2 \biggr] \, z h, \label{exponential-potential z-equation - complex notation} \\
    \dot{h} & = - \biggl[ (x)^2 + \dfrac{1+w}{2} \, (z)^2 \biggr] h^2, \label{exponential-potential h-equation - complex notation}
\end{align}
\end{subequations}
jointly with the condition
\begin{equation}
    (x)^2 + (y)^2 + (z)^2 = 1. \label{exponential-potential sphere condition - complex notation}
\end{equation}
For multi-field multi-exponential potentials, if $z=0$, universal analytic convergence results were discussed in ref.~\cite{Shiu:2023fhb}.

For the potential in eq.~(\ref{V}), one gets the dynamical system
\begin{subequations}
\begin{align}
    \dot{x} & = \biggl[ -x - \I c \, \bigl[(y_+)^2 - (y_+^*)^2 \bigr] + x \, \Bigl[ (x)^2 + \dfrac{1+w}{2} \, (z)^2 \Bigr] \biggr] \, h, \label{x-equation - complex notation} \\
    \dot{y}_+ & = \biggl[ (x)^2 + \dfrac{1+w}{2} \, (z)^2 + \I c x \biggr] \, y_+ h, \label{y-equation - complex notation} \\[0.45ex]
    \dot{z} & = \biggl[ - \dfrac{1+w}{2} + (x)^2 + \dfrac{1+w}{2} \, (z)^2 \biggr] \, z h, \label{z-equation - complex notation} \\
    \dot{h} & = - \biggl[ (x)^2 + \dfrac{1+w}{2} \, (z)^2 \biggr] h^2, \label{h-equation - complex notation}
\end{align}
\end{subequations}
subject to the two constraints
\begin{subequations}
\begin{align}
    (x)^2 + (y_+)^2 + (y_+^*)^2 + (z)^2 & = 1 - \dfrac{l^2}{h^2}, \label{sphere condition - complex notation} \\
    y_+ y_+^* & = \dfrac{1}{2} \dfrac{l^2}{h^2}, \label{reality-condition - complex notation}
\end{align}
\end{subequations}
where one should notice that $\smash{y_- = y_+^*}$.
Such two constraints can be combined into
\begin{equation}
    (x)^2 + (y_+ + y_+^*)^2 + (z)^2 = 1. \label{combined constraint - complex notation}
\end{equation}

Let $\smash{\yr = (y_+ + y_+^*)/2}$ and $\smash{\yi = -\I (y_+ - y_+^*)/2}$, i.e.
\begin{equation}
    y_+ = \yr + \I \yi.
\end{equation}
Then one gets immediately the dynamical system in eqs.~(\ref{x-equation}, \ref{yr-equation}, \ref{yi-equation}, \ref{z-equation}, \ref{h-equation}, \ref{sphere condition}).

\subsection{Comparison with the literature}
As summarized by recent refs.~\cite{DESI:2025fii, Urena-Lopez:2025rad}, in terms of different variables, studies of the same problem have been performed before in terms of an autonomous dynamical system.
Taking as a reference refs.~\cite{Cedeno:2017sou, LinaresCedeno:2020dte}, one may define polar coordinates through the identifications\footnote{It should be noted that refs.~\cite{Cedeno:2017sou, LinaresCedeno:2020dte, DESI:2025fii} define the potential as $\smash{V(\varphi) = m^2 f_\varphi^2 \, \bigl[ 1 + \cos \, (\varphi/f_\varphi) \bigr]}$. Hence, the fields $\phi$ and $\varphi$ are related as $\smash{\phi = \varphi + f_\varphi \pi}$. Also, for convenience we defined the scalar field density parameter as $\Omega = \sigma^2$.}
\begin{subequations}
    \begin{align}
        x & = \sigma \, \cos \, \dfrac{\theta}{2}, \\
        \yr & = - \dfrac{\sigma}{2} \, \sin \dfrac{\theta}{2},
    \end{align}
\end{subequations}
as well as
\begin{equation}
    \yi = - \dfrac{c}{2} \psi.
\end{equation}
By writing the inverse relations
\begin{subequations}
    \begin{align}
        \sigma & = \sqrt{(x)^2 + (2 \yr)^2}, \\
        \theta & = 2 \arctan \, \biggl( - \dfrac{1}{2} \dfrac{x}{\yr} \biggr),
    \end{align}
\end{subequations}
it is obvious that one can rearrange eqs.~(\ref{x-equation}, \ref{yr-equation}, \ref{yi-equation}, \ref{z-equation}, \ref{h-equation}, \ref{sphere condition}) as\footnote{It might be useful to note the identity
\begin{align*}
    w_{\mathrm{tot}} = - 1 + \dfrac{2 \epsilon}{3} = \dfrac{T - V + w \rho}{T + V + \rho} = w - (w + \cos \theta) \sigma^2.
\end{align*}}\begin{subequations}
    \begin{align}
        \dot{\sigma} & = \dfrac{1}{2} \, (w + \cos \theta) (1 - \sigma^2) \, \sigma h, \\
        \dot{\theta} & = (- \sin \theta + 2 \psi) h, \\
        \dot{\psi} & = \dfrac{1}{2} \, \bigl[ [(1+w) - (w + \cos \theta) \sigma^2] + \sigma^2 \sin \theta \bigr] h, \\
        \dot{z} & = - \dfrac{1}{2} (w + \cos \theta) [1 - (z)^2] z h, \\
        \dot{h} & = - \dfrac{1}{2} \, [(1+w) - (w + \cos \theta) \sigma^2] h^2,
    \end{align}
\end{subequations}
with the constraint
\begin{equation}
    \sigma^2 + (z)^2 = 1.
\end{equation}

\subsection{Comparison with slow-roll results}
We can compare our results with the estimates one would get under the slow-roll approximation.
Let us neglect the additional fluid, and, for definiteness, let us consider solutions where $\phi/\f \in \; [0,\pi]$ and $\dot{\phi}<0$.
In the slow-roll regime, one can approximate
\begin{subequations}
    \begin{align}
        \kappa_d \dot{\phi} & = - \sqrt{d-2} \, \sqrt{\epsilon_V} H, \\
        H^2 & = \dfrac{2 \kappa_d^2 V}{(d-1)(d-2)},
    \end{align}
\end{subequations}
where
\begin{equation}
    \epsilon_V = \dfrac{d-2}{4 \kappa_d^2 \f^2} \, \cot^2 \Bigl(\dfrac{\phi}{2 \f}\Bigr),
\end{equation}
As long as $\smash{\ab \pi - \phi/\f \ab \leq \ab \pi - 2 \, \cot^{-1} \, (2 \kappa_d \f/\sqrt{d-2}) \ab}$, we have $\epsilon_V \leq 1$; in particular, for small $\kappa_d \f \ll 1$, we may approximate $\smash{\ab \pi - \phi/\f \ab \lesssim 4 \kappa_d \f/\sqrt{d-2}}$.
A simple integration gives
\begin{equation} \label{slow-roll physical time-interval lower bound}
    \Delta t_{\In \fin} = \dfrac{\kappa_d \f}{\m} \dfrac{\sqrt{d-1}}{\sqrt{d-2}} \, \ln \, \dfrac{1 + \sin \, \dfrac{\phi_\In}{2 \f}}{1 - \sin \, \dfrac{\phi_\In}{2 \f}} \, \dfrac{1 - \sin \, \dfrac{\phi_\fin}{2 \f}}{1 + \sin \, \dfrac{\phi_\fin}{2 \f}}.
\end{equation}
Although eq.~(\ref{slow-roll physical time-interval lower bound}) expresses neatly the time interval as a function of the initial and final misalignment angles, it is only accurate for small such angles and neglecting additional fluids.
To gain more intuition, let us expand for small deviations from the hilltop, and express the result in terms of the $\smash{\epsilon_V}$-parameter via $\smash{\phi = 2 \f \, \mathrm{arccot} \, \bigl[ 2 \kappa_d \f \, \sqrt{\epsilon_V}/\sqrt{d-2} \bigr]}$.
We obtain
\begin{equation} \label{approximate slow-roll physical time-interval}
    \Delta t_{\In \fin} = \dfrac{\kappa_d \f}{\m} \dfrac{\sqrt{d-1}}{\sqrt{d-2}} \, \biggl[ \ln \dfrac{\epsilon_{V, \fin}}{\epsilon_{V, \In}} + \sum_{n=1}^\infty \dfrac{(\kappa_d \f)^{2n}}{c_n} \dfrac{\epsilon_{V, \fin}^n - \epsilon_{V, \In}^n}{(d-2)^n} \biggr],
\end{equation}
where $c_1 = -1/2$, $c_2 = 1/3$, $c_3 = -3/20$ and $c_4 = 2/35$.

Let us now rewrite eq.~(\ref{refined physical time-interval lower bound}) in a way that is easy to compare with the slow-roll result.
In general, we may express eq.~(\ref{refined physical time-interval lower bound}) as\footnote{In the absence of additional fluids, one may always write
\begin{align*}
    \cos^2 \Bigl[\dfrac{\phi}{2 \f}\Bigr] = 1 - \dfrac{d-2}{4 \kappa_d^2 \f^2 \m^2} \, (d-1-\epsilon) H^2.
\end{align*}
Then, assuming the slow-roll conditions to hold, one can approximate $\smash{\epsilon \simeq \epsilon_V}$ and $\smash{H^2 \simeq 4 \kappa_d^2 \f^2 \m^2 \sin^2 (\phi/2\f)/[(d-1)(d-2)]}$.}
\begin{equation}
    \Delta t_{\In \fin} \gtrsim \dfrac{\dfrac{1}{2 \m} \dfrac{1}{\sqrt{d-1}} \biggl[\sqrt{\epsilon_{V,\fin}} - \dfrac{\epsilon_{V,\In}}{\sqrt{\epsilon_{V,\fin}}}\biggr]}{\sqrt{1 - \dfrac{d-2}{d-1} \dfrac{d-1 - \epsilon_{V,\fin}}{d-2 + 4 \kappa_d^2 \f^2 \epsilon_{V,\fin}}}}.
\end{equation}
Again, expanding for small $\epsilon_V$-parameters, we eventually find the next-to-leading-order expression in $\smash{\epsilon_{V,\In}/\epsilon_{V,\fin}}$
\begin{equation} \label{approximate slow-roll physical time-interval lower bound}
    \Delta t_{\In \fin} \gtrsim \dfrac{1}{2 \m} \dfrac{\sqrt{d-2}}{\sqrt{d-2 + 4 (d-1) \kappa_d^2 \f^2}} \, \biggl( 1 - \dfrac{\epsilon_{V,\In}}{\epsilon_{V,\fin}} \biggr).
\end{equation}
A comparison is displayed in fig.~\ref{fig.: slow-roll comparison}.
The inherently slow-roll result in eq.~(\ref{approximate slow-roll physical time-interval}) belongs to the interval singled out by the slow-roll approximation of our result, in eq.~(\ref{approximate slow-roll physical time-interval lower bound}).
Yet, the inherently slow-roll result only works neglecting additional fluids, and only for small $\epsilon_V$s, whereas our result is always valid and hence provides a more advanced tool to study longer-lived epochs of cosmic evolution.

\begin{figure}[h]
    \centering
    
    \begin{tikzpicture}[scale=4.5,every node/.style={font=\normalsize}]

    \draw[->] (0.90,0) -- (2.1,0) node[right]{$\dfrac{\epsilon_{V,\fin}}{\epsilon_{V,\In}}$};
    \draw[->] (1,-0.1) -- (1,0.7) node[left]{$m \Delta t_{\In \fin}$};
    
    \draw[name path = a, domain=1:2, smooth, thick, variable=\x, green, opacity=0] plot ({\x}, {0.65});
    \draw[name path = b, domain=1:2, smooth, thick, variable=\x, green] plot ({\x}, {0.287*(1-1/\x)}) node[above right, black]{};
    \tikzfillbetween[of = a and b]{green, opacity=0.1};
    
    \draw[domain=1:2, smooth, thick, variable=\x, orange] plot ({\x}, {0.707*ln(\x)}) node[above right, black]{};

    \node[above left] at (1,0) {$0$};
    \node[below right] at (1,0) {$1$};

    \end{tikzpicture}
    
    \caption{For a fixed value of $\kappa_d f$, the orange curve represents the slow-roll calculation of $m \Delta t_{\In \fin}$, in eq.~(\ref{approximate slow-roll physical time-interval}), while the green area represents the region allowed by our bound as approximated in the slow-roll regime, in eq.~(\ref{approximate slow-roll physical time-interval lower bound}).}
    
    \label{fig.: slow-roll comparison}
\end{figure}
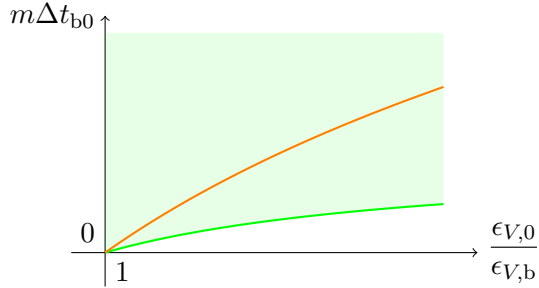

\section{Refined bounds} \label{app.: refined bounds}

In the derivation of the universal bound of eq.~(\ref{refined physical time-interval lower bound}), at the cost of getting cluttered formulae, we may actually get a sharper result, as anticipated in the main text.

As a preliminary result, let us observe the following trivial fact.

\begin{lemma} \label{lemma: h lower bound}
    At all times $t \in [t_\In, t_\fin]$, we have
    \begin{equation} \label{eq.: h lower bound}
        h(t) \geq \dfrac{1}{\xi_\fin (t - t_\In) + \dfrac{1}{h_\In}}.
    \end{equation}
\end{lemma}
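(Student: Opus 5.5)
The plan is to integrate the differential inequality for $1/h$ that follows from eq.~(\ref{h-equation}). By definition, $\xi = -\dot{h}/h^2$. On the interval $[t_\In, t_\fin]$ the definition of $t_\fin$ gives $\xi(t) \leq \xi_\fin$ at all times. Hence $\dot{h} \geq -\xi_\fin h^2$ on the whole interval.

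First, I will check that $h$ stays strictly positive, so that $1/h$ is well defined. We have $h = (d-1)H > 0$ for an expanding FLRW solution. Moreover, eq.~(\ref{h-equation}) has the form $\dot{h} = -(\cdots) h^2$, so $h$ cannot reach zero in finite time, by uniqueness of the solution $h \equiv 0$. This is the only point needing care, and it is routine.

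Next, I will compute $\frac{\de}{\de t}(1/h) = -\dot{h}/h^2 = \xi \leq \xi_\fin$. Integrating from $t_\In$ to $t$ then yields $1/h(t) \leq 1/h_\In + \xi_\fin (t-t_\In)$. The right-hand side is positive because $\xi_\fin \geq \xi_\In \geq 0$, and $\xi \geq 0$ follows from eq.~(\ref{h-equation}) since $w \geq -1$. Taking reciprocals therefore preserves the inequality and gives eq.~(\ref{eq.: h lower bound}).

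No genuine obstacle is expected. The only subtlety is ensuring the reciprocal step is legitimate, which the positivity of both sides guarantees. The same argument holds verbatim with several fluids or with a cosmological constant, since eq.~(\ref{h-equation}) is unchanged in those cases.
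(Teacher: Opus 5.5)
Your proof is correct and follows the paper's: the paper also uses $\xi \leq \xi_\fin$ on $[t_\In, t_\fin]$ to write $\dot{h} \geq -\xi_\fin h^2$ and integrates, which is your computation $\frac{\de}{\de t}(1/h) = \xi \leq \xi_\fin$, with your extra care about positivity and the reciprocal step. One small caveat: with an AdS shift $h$ can vanish in finite time (recollapse, where $\xi \to \infty$), so your claim that $h$ never reaches zero should be argued only on the interval, where $\xi \leq \xi_\fin$ keeps $1/h$ finite.
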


\begin{proof}
    By assumption, we have $\xi \leq \xi_\fin$ at all times $t \in [t_\In, t_\fin]$.
    Hence, eq.~(\ref{h-equation}) allows one to write $\dot{h} = - \xi h^2 \geq - \xi_\fin h^2$, whose integration immediately gives eq.~(\ref{eq.: h lower bound}).
\end{proof}

As a first improvement, we note that, in constraining $\ab x \ab$ as we do in the proof of lemma \ref{lemma: abx upper bound}, we may keep the linear term in $\ab x \ab$.

\begin{corollary} \label{corollary: improved abx upper bound}
    One has
    \begin{equation} \label{eq.: improved abx upper bound}
        \ab x(t) \ab \leq \dfrac{1}{\bigl[ 1 + \xi_\fin h_\In (t-t_\In) \bigr]^{\frac{1}{\xi_\fin}-1}} \, \biggl[ \ab x_\In \ab - \dfrac{\sqrt{2} \, c l}{h_\In} \biggr] + \sqrt{2} \, c l \, \Biggl[ \xi_\fin (t-t_\In) + \dfrac{1}{h_\In} \Biggr].
    \end{equation}
    This reduces to the statement of lemma \ref{lemma: abx upper bound}, i.e. eq.~(\ref{eq.: abx upper bound}), by formally fixing $\xi_\fin \mapsto 1$ in the right hand-side.
\end{corollary}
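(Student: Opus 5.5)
The plan is to redo the proof of lemma \ref{lemma: abx upper bound}, but this time keep the damping term and solve the resulting linear differential inequality exactly. The time-dependent coefficient $h$ will be controlled through lemma \ref{lemma: h lower bound}.

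\emph{Step 1: differential inequality.} As in the proof of lemma \ref{lemma: abx upper bound}, I start from
\begin{equation*}
\frac{\de}{\de t}\ab x\ab=\bigl[-\ab x\ab(1-\xi)+4\,\mathrm{sgn}(x)\,c\,\yr\yi\bigr]h .
\end{equation*}
This holds wherever $x\neq0$; at zeros of $x$ I will interpret it as an upper Dini derivative, or equivalently run a comparison argument, so nondifferentiability is harmless. On $[t_\In,t_\fin]$ we have $\xi\le\xi_\fin\le 1$, so $-(1-\xi)\ab x\ab\le -(1-\xi_\fin)\ab x\ab$. As before, $\ab\yr\ab\le 1/2$ and $\ab\yi\ab h\le l/\sqrt2$. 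Together these give
\begin{equation*}
\frac{\de}{\de t}\ab x\ab\le -(1-\xi_\fin)\,h\,\ab x\ab+\sqrt2\,c l .
\end{equation*}

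\emph{Step 2: Grönwall.} Set $A(t)=(1-\xi_\fin)\int_{t_\In}^t h$. The comparison principle for linear inequalities then yields
\begin{equation*}
\ab x(t)\ab\le e^{-A(t)}\ab x_\In\ab+\sqrt2\,cl\int_{t_\In}^t e^{-[A(t)-A(s)]}\,\de s .
\end{equation*}
Since $1-\xi_\fin\ge0$, a lower bound on $h$ gives an upper bound on each exponential. Lemma \ref{lemma: h lower bound} gives $h(u)\ge [\xi_\fin(u-t_\In)+1/h_\In]^{-1}$. Integrating this from $s$ to $t$, and writing $\tau=t-t_\In$, $\sigma=s-t_\In$, $k=1/\xi_\fin-1$, I get
\begin{equation*}
e^{-[A(t)-A(s)]}\le\Bigl(\frac{1+\xi_\fin h_\In\sigma}{1+\xi_\fin h_\In\tau}\Bigr)^{k}.
\end{equation*}
In particular $e^{-A(t)}\le(1+\xi_\fin h_\In\tau)^{-k}$.

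\emph{Step 3: explicit integral.} The key identity is $\xi_\fin(k+1)=1$. With it, the remaining integral evaluates as
\begin{equation*}
\int_0^\tau(1+\xi_\fin h_\In\sigma)^k\de\sigma=\frac{(1+\xi_\fin h_\In\tau)^{k+1}-1}{h_\In}.
\end{equation*}
Dividing by $(1+\xi_\fin h_\In\tau)^k$, the source term contributes
\begin{equation*}
\sqrt2\,cl\Bigl[\xi_\fin\tau+\frac1{h_\In}-\frac{1}{h_\In}(1+\xi_\fin h_\In\tau)^{-k}\Bigr].
\end{equation*}
Adding the homogeneous piece and collecting the $(1+\xi_\fin h_\In\tau)^{-k}$ terms reproduces eq.~(\ref{eq.: improved abx upper bound}) exactly.

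\emph{Reduction check.} Setting $\xi_\fin\mapsto1$ on the right-hand side gives $k=0$. The bracket then collapses to $\ab x_\In\ab-\sqrt2cl/h_\In+\sqrt2cl(\tau+1/h_\In)$, which is eq.~(\ref{eq.: abx upper bound}).

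\emph{Main obstacle.} The only delicate point is justifying the comparison argument for $\ab x\ab$ at its zeros. I would handle it by applying the same inequality to $\sqrt{x^2+\delta}$ and letting $\delta\to0$, or by noting directly that the upper Dini derivative satisfies the same bound. Everything else is the exact integration above, which hinges on the cancellation $\xi_\fin(k+1)=1$.
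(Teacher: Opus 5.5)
Your proof is correct and takes essentially the same route as the paper: the differential inequality for $|x|$ with the damping term $-(1-\xi_\fin)h|x|$ kept, the forcing bounded by $\sqrt{2}\,cl$, the lower bound on $h$ from lemma~\ref{lemma: h lower bound}, and exact integration of the resulting linear inequality using $\xi_\fin(k+1)=1$. The differences are cosmetic: the paper substitutes the lower bound on $h$ into the damping coefficient before integrating, whereas you integrate with $\int h$ and bound the exponential factors afterwards, and you also handle the nondifferentiability of $|x|$ at its zeros, which the paper does not address.
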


\begin{proof}
    In view of eqs.~(\ref{x-equation}, \ref{eq.: h lower bound}), we may write the chain of differential inequalities
    \begin{align*}
        \dfrac{\de}{\de t} \, \ab x \ab = \bigl[ - \ab x \ab (1- \xi) + 4 \mathrm{sgn}(x) \, c \, \yr \yi \bigr] \, h & \leq \bigl[ - \ab x \ab (1- \xi_\fin) + 4 \mathrm{sgn}(x) \, c \, \yr \yi \bigr] \, h \\
        & \leq - \dfrac{(1 - \xi_\fin) \, \ab x \ab}{\xi_\fin (t - t_\In) + \dfrac{1}{h_\In}} + \sqrt{2} \, c l,
    \end{align*}
    where we also took advantage of eqs.~(\ref{sphere condition}, \ref{(yr,yi)-bounds}) to bound the term depending on $c$.
    An integration immediately gives eq.~(\ref{eq.: improved abx upper bound}).
\end{proof}

As a second improvement, in constraining $z$, we can constrain the functional form of $z$, beyond just observing it decreases as done in lemma \ref{lemma: time-interval lower bound}.
As in the main text, let $\xi_\fin \leq (1+w)/2$.

\begin{corollary} \label{corollary: improved z upper bound}
    One has
    \begin{equation} \label{eq.: improved z upper bound}
        z(t) \leq \dfrac{z_\In}{\bigl[ 1 + \xi_\fin h_\In (t-t_\In) \bigr]^{\frac{1+w}{2 \xi_\fin} - 1}}.
    \end{equation}
\end{corollary}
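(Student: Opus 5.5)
The plan is to integrate the logarithmic derivative of $z$ from eq.~(\ref{z-equation}), using the assumed bound $\xi \leq \xi_\fin$ on $[t_\In, t_\fin]$ together with the lower bound on $h$ from lemma \ref{lemma: h lower bound}. Since $z \geq 0$, I can work with $\ln z$ wherever $z>0$. If $z_\In = 0$, the fluid component vanishes identically by uniqueness of ODE solutions, and the claim is trivial. For $z_\In>0$, $z$ stays positive and eq.~(\ref{z-equation}) gives
\begin{equation*}
    \dfrac{\de}{\de t} \ln z = \Bigl[ -\dfrac{1+w}{2} + \xi \Bigr] h \leq -\Bigl[ \dfrac{1+w}{2} - \xi_\fin \Bigr] h ,
\end{equation*}
where I use that the bracket $(x)^2 + [(1+w)/2](z)^2$ equals $\xi$ by eq.~(\ref{h-equation}).

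Under the standing assumption $\xi_\fin \leq (1+w)/2$, the coefficient $(1+w)/2 - \xi_\fin$ is non-negative. I can therefore replace $h$ by its lower bound from lemma \ref{lemma: h lower bound}, namely $h \geq [\xi_\fin (t-t_\In) + 1/h_\In]^{-1}$, and the inequality keeps its direction. This gives
\begin{equation*}
    \dfrac{\de}{\de t} \ln z \leq - \Bigl[ \dfrac{1+w}{2} - \xi_\fin \Bigr] \dfrac{1}{\xi_\fin (t-t_\In) + 1/h_\In} .
\end{equation*}

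Integrating from $t_\In$ to $t$, the right-hand side becomes
\begin{equation*}
    -\Bigl[ \dfrac{1+w}{2 \xi_\fin} - 1 \Bigr] \ln \bigl[ 1 + \xi_\fin h_\In (t - t_\In) \bigr] .
\end{equation*}
Exponentiating then yields exactly eq.~(\ref{eq.: improved z upper bound}). If $\xi_\fin = 0$, the exponent and the integral should be read as limits, and the bound degenerates to $z \leq z_\In \, \e^{-\frac{1+w}{2} h_\In (t-t_\In)}$. I expect this limiting form to hold directly, because in that case $h$ is constant.

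The only delicate point is the sign bookkeeping. Replacing $h$ by a smaller quantity is legitimate only because it multiplies a non-positive coefficient, so the assumption $\xi_\fin \leq (1+w)/2$ is essential here. Apart from that, the argument is a direct Grönwall-type integration.
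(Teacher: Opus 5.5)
Your proof is correct and follows the paper's argument: bound the growth rate in eq.~(\ref{z-equation}) using $\xi \leq \xi_\fin$, replace $h$ by its lower bound from lemma~\ref{lemma: h lower bound}, and integrate. The substitution is legitimate only because the coefficient $-(1+w)/2 + \xi_\fin$ is non-positive and $z \geq 0$; the paper uses this without comment, while you state it explicitly and also handle the edge cases $z_\In = 0$ and $\xi_\fin = 0$.
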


\begin{proof}
    In view of eqs.~(\ref{z-equation}, \ref{eq.: h lower bound}), because $\xi \leq \xi_\fin$, one may write the differential inequality
    \begin{align*}
        \dot{z} = \Bigl[ - \dfrac{1+w}{2} + \xi \Bigr] z h \leq \Bigl[ - \dfrac{1+w}{2} + \xi_\fin \Bigr] z h \leq \dfrac{- \dfrac{1+w}{2} + \xi_\fin}{\xi_\fin (t - t_\In) + \dfrac{1}{h_\In}} \, z.
    \end{align*}
    An integration immediately gives eq.~(\ref{eq.: improved z upper bound}).
\end{proof}

Finally, combining corollaries \ref{corollary: improved abx upper bound} and \ref{corollary: improved z upper bound} allows us to reach a stronger conclusion than lemma \ref{lemma: time-interval lower bound}.

\begin{lemma} \label{lemma: improved time-interval lower bound}
    The duration of the time interval $[t_\In, t_\fin]$ is related to the variation of the $\xi$-parameter as
    \begin{equation} \label{eq.: improved time-interval lower bound}
        \begin{split}
            \xi(t) - \xi_\In \leq 2 \sqrt{\xi(t)} \, \Biggl[ \sqrt{2} \, c l \, \xi_\fin (t-t_\In) + \dfrac{1 - \bigl[ 1 + \xi_\fin h_\In (t-t_\In) \bigr]^{\frac{1}{\xi_\fin}-1}}{\bigl[ 1 + \xi_\fin h_\In (t-t_\In) \bigr]^{\frac{1}{\xi_\fin}-1}} \, \biggl[ \ab x_\In \ab - \dfrac{\sqrt{2} \, c l}{h_\In} \biggr] \Biggr] \\
            + \dfrac{1+w}{2} \, (z_\In)^2 \Biggl[ \dfrac{1}{\bigl[ 1 + \xi_\fin h_\In (t-t_\In) \bigr]^{\frac{1+w}{\xi_\fin} - 2}} - 1 \Biggr] &.
        \end{split}
    \end{equation}
    This reduces to the statement of lemma \ref{lemma: time-interval lower bound} by formally fixing $(z_\In, \xi_\fin) \mapsto (0,1)$ in the right hand-side.
\end{lemma}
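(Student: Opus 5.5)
The argument mirrors the proof of lemma \ref{lemma: time-interval lower bound}, with its two crude inputs replaced by corollaries \ref{corollary: improved abx upper bound} and \ref{corollary: improved z upper bound}. The starting point is the identity from eq.~(\ref{h-equation}), $\xi = (x)^2 + \frac{1+w}{2}(z)^2$. Evaluating it at $t$ and at $t_\In$ and subtracting gives
\begin{equation*}
\xi - \xi_\In = \bigl[(x)^2 - (x_\In)^2\bigr] + \frac{1+w}{2}\bigl[(z)^2 - (z_\In)^2\bigr].
\end{equation*}
I bound the two brackets separately.

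\textbf{Kinetic term.} Write $(x)^2 - (x_\In)^2 = [\ab x \ab + \ab x_\In \ab]\,[\ab x \ab - \ab x_\In \ab]$.

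For the first factor, $(x)^2 \leq \xi(t)$ gives $\ab x \ab \leq \sqrt{\xi(t)}$. Similarly $\ab x_\In \ab \leq \sqrt{\xi_\In} \leq \sqrt{\xi(t)}$, provided $\xi_\In \leq \xi(t)$. If instead $\xi(t) < \xi_\In$, the claimed inequality is trivial whenever the right-hand side is nonnegative; I will handle that sign issue with a short remark, or else bound by $2\sqrt{\xi_\fin}$ as in the main text. Either way, the first factor is at most $2\sqrt{\xi(t)}$.

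For the second factor, subtract $\ab x_\In \ab$ from eq.~(\ref{eq.: improved abx upper bound}). Writing $X = [1+\xi_\fin h_\In (t-t_\In)]^{\frac{1}{\xi_\fin}-1}$, this gives
\begin{equation*}
\ab x \ab - \ab x_\In \ab \leq \sqrt{2}\,cl\,\xi_\fin (t-t_\In) + \frac{1-X}{X}\Bigl[\ab x_\In \ab - \frac{\sqrt{2}\,cl}{h_\In}\Bigr].
\end{equation*}
This step uses $\sqrt{2}\,cl/h_\In - \sqrt{2}\,cl/h_\In \cdot 1 = 0$ together with the regrouping $A/X - A = A(1-X)/X$ for $A = \ab x_\In \ab - \sqrt{2}\,cl/h_\In$.

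Multiplying the two factor bounds reproduces the first line of eq.~(\ref{eq.: improved time-interval lower bound}).

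\textbf{Main obstacle.} Multiplying requires the second factor to be nonnegative; if $\ab x \ab < \ab x_\In \ab$ the product is negative. In that case I use the bound $(x)^2 - (x_\In)^2 \leq 0$ directly. The first-line expression is then still an upper bound as long as it is nonnegative, which holds because $X \geq 1$ (as $\xi_\fin \leq 1$). The potentially negative term $(1-X)A/X$ is at least $-A$ in size, and it is dominated because, integrated, it is what the comparison ODE solution controls. I expect some care is needed here, possibly stating the bound on $\ab x \ab - \ab x_\In \ab$ with a $\max(\cdot,0)$, which is harmless.

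\textbf{Fluid term.} Square eq.~(\ref{eq.: improved z upper bound}), using $z \geq 0$:
\begin{equation*}
(z)^2 \leq (z_\In)^2\,[1+\xi_\fin h_\In (t-t_\In)]^{-(\frac{1+w}{\xi_\fin}-2)}.
\end{equation*}
Subtracting $(z_\In)^2$ and multiplying by $(1+w)/2$ gives the second line.

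Summing the two pieces yields eq.~(\ref{eq.: improved time-interval lower bound}).

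\textbf{Reduction check.} Setting $z_\In = 0$ and $\xi_\fin = 1$ gives $X = 1$, so the right-hand side collapses to $2\sqrt{\xi}\,\sqrt{2}\,cl\,(t-t_\In)$. This recovers lemma \ref{lemma: time-interval lower bound} after bounding $\xi \leq \xi_\fin$.
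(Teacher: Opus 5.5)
Your decomposition is the same as the paper's. You split $\xi-\xi_\In$ into kinetic and fluid differences, bound the fluid one with Corollary~\ref{corollary: improved z upper bound} (this half is fine), and bound the kinetic one by factoring and invoking Corollary~\ref{corollary: improved abx upper bound}. The gap is in the kinetic half, exactly where you flagged it, and your patch does not close it. The paper's proof performs the same multiplication without comment. Write $B(t)$ for the bracket multiplying $2\sqrt{\xi(t)}$, so the corollary reads $\ab x\ab-\ab x_\In\ab\le B$.

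Your claim that $X\ge1$ makes the first line nonnegative is backwards. $X\ge1$ gives $(1-X)/X\le0$, and $\dot B(t_\In)=\sqrt2\,cl-(1-\xi_\fin)h_\In\ab x_\In\ab$ is negative whenever Hubble friction beats the potential force. For $B<0$ you only get $(x)^2-(x_\In)^2\le(\ab x\ab+\ab x_\In\ab)\,B$, and enlarging the positive factor to $2\sqrt{\xi(t)}$ goes the wrong way. This is not a technicality. Both sides of the claimed inequality vanish at $t_\In$, and
\begin{equation*}
\frac{\de}{\de t}\bigl(\mathrm{LHS}-\mathrm{RHS}\bigr)\Big|_{t_\In}=2h_\In\ab x_\In\ab\bigl[(1-\xi_\fin)\sqrt{\xi_\In}-(1-\xi_\In)\ab x_\In\ab\bigr]-(1+w)(z_\In)^2h_\In(\xi_\fin-\xi_\In)+R,
\end{equation*}
with $\ab R\ab\le2\sqrt2\,cl\,(\ab x_\In\ab+\sqrt{\xi_\In})$. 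Take $w=0$, $(x_\In)^2=0.01$, $(z_\In)^2=0.5$ (so $\xi_\In=0.26$), $\xi_\fin=0.3$, $l=h_\In$ and $c=10^{-3}$; this is compatible with $4(\yr)^2=0.49$. Then the derivative is at least $0.03\,h_\In$, so the inequality fails just after $t_\In$. With $\Lambda=0$ the axion eventually oscillates and $\xi$ does reach $0.3$, so this is a legitimate interval $[t_\In,t_\fin]$.

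The repair is to split on the sign of $\ab x\ab-\ab x_\In\ab$ rather than on $\xi(t)$ versus $\xi_\In$.
\begin{itemize}
\item If $\ab x\ab\ge\ab x_\In\ab$, then $\ab x_\In\ab\le\ab x\ab\le\sqrt{\xi(t)}$. This settles your first worry without the $2\sqrt{\xi_\fin}$ fallback, which would change the statement. Also $0\le\ab x\ab-\ab x_\In\ab\le B$, so $(x)^2-(x_\In)^2\le2\sqrt{\xi(t)}\,B$.
\item If $\ab x\ab<\ab x_\In\ab$, the kinetic difference is negative, which suffices when $B\ge0$.
\item $B<0$ can only occur when $\xi(t)<\xi_\In$, because it forces $\ab x\ab<\ab x_\In\ab$ while $z\le z_\In$.
\end{itemize}
So the inequality holds verbatim at every $t$ with $\xi(t)\ge\xi_\In$. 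In particular it holds at $t=t_\fin$, which is what bounds $\Delta t_{\In\fin}$. It holds at every $t$ once $B$ is replaced by $\max(B,0)$. Your suggested $\max$ is the right fix, but it weakens the statement rather than being harmless. Your reduction check is fine, since there $B=\sqrt2\,cl\,(t-t_\In)\ge0$.
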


\begin{proof}
    Let us split the proof into two main steps.
    \begin{enumerate}[label=\roman*.]
        \item Because $(x)^2 - (x_\In)^2 = \xi - \xi_\In - [(1+w)/2] \, [(z)^2 - (z_\In)^2]$, in view of eq.~(\ref{eq.: improved z upper bound}), we may write
        \begin{align*}
            [x(t)]^2 - (x_\In)^2 \geq \xi(t) - \xi_\In - \dfrac{1+w}{2} \, (z_\In)^2 \Biggl[ \dfrac{1}{\bigl[ 1 + \xi_\fin h_\In (t-t_\In) \bigr]^{\frac{1+w}{\xi_\fin} - 2}} - 1 \Biggr].
        \end{align*}
        \item Because $(x)^2 - (x_\In)^2 = [\ab x \ab + \ab x_\In \ab] [\ab x \ab - \ab x_\In \ab]$, in view of eq.~(\ref{eq.: improved abx upper bound}), we can write
        \begin{align*}
            [x(t)]^2 - (x_\In)^2 \leq 2 \sqrt{\xi(t)} \, \Biggl[ \sqrt{2} \, c l \, \xi_\fin (t-t_\In) + \dfrac{1 - \bigl[ 1 + \xi_\fin h_\In (t-t_\In) \bigr]^{\frac{1}{\xi_\fin}-1}}{\bigl[ 1 + \xi_\fin h_\In (t-t_\In) \bigr]^{\frac{1}{\xi_\fin}-1}} \, \biggl[ \ab x_\In \ab - \dfrac{\sqrt{2} \, c l}{h_\In} \biggr] \Biggr],
        \end{align*}
        where we also took advantage of the inequality $\smash{(x)^2 \leq \xi - [(1+w)/2] \, (z)^2 \leq \xi}$.
    \end{enumerate}
    By combining the two bounds above, eq.~(\ref{eq.: improved time-interval lower bound}) follows immediately.
\end{proof}

\section{Generalized periodic potentials} \label{app.: generalized bounds}

In order to study the generalized model of eq.~(\ref{generalized V}), one may exploit the results for the elementary potential of eq.~(\ref{V}), adapting them accordingly.
One can write
\begin{subequations}
\begin{align}
    V_{p} & = \dfrac{V^p}{(\m^2 \f^2)^{p-1}}, \\
    V'_{p} & = p \, \dfrac{V^{p-1}}{(\m^2 \f^2)^{p-1}} \, V', \\
    V''_{p} & = p \, \dfrac{V^{p-1}}{(\m^2 \f^2)^{p-1}} \, \biggl[ V'' + (p-1) \, \dfrac{(V')^2}{V} \biggr].
\end{align}
\end{subequations}
One may define an autonomous dynamical system through the same variable redefinitions as in eqs.~(\ref{x}, \ref{yr}, \ref{yi}, \ref{z}, \ref{h}) and eqs.~(\ref{c}, \ref{l}).
Let us also note that
\begin{align*}
    \dfrac{\kappa_d^2 V_{p}}{(d-1)(d-2) H^2} & = 2^{2p-1} (\yr)^{2p} \, \Bigl( \dfrac{h^2}{l^2} \Bigr)^{p-1}, \\
    \dfrac{\f \kappa_d^2 V'_{p}}{(d-1)(d-2) H^2} & = - 2^p p \, \yr^{2p-1} \yi \, \Bigl( \dfrac{h^2}{l^2} \Bigr)^{p-1}, \\
    \dfrac{\f^2 \kappa_d^2 V''_{p}}{(d-1)(d-2) H^2} & = 2^p p \, \biggl[ 2^{p} (p-1) \, (\yi)^2 - \dfrac{1}{2} \, [(\yr)^2 - (\yi)^2] \, \biggr] \, (\yr)^{2(p-1)} \Bigl( \dfrac{h^2}{l^2} \Bigr)^{p-1}.
\end{align*}

It might as well be useful to take into account several barotropic fluids, with energy densities $\rho_\alpha$ and parameters $w_\alpha$, each one with a corresponding variable $z_\alpha$ defined as in eq.~(\ref{z}).

One can verify that the dynamical system takes the form
\begin{subequations}
\begin{align}
    \dot{x} & = \Biggl[ -x + 2^{p+1} p c \, (\yr)^{2p-1} \yi \, \Bigl( \dfrac{h^2}{l^2} \Bigr)^{p-1} + x \, \biggl[ (x)^2 + \sum_{\beta} \dfrac{1+w_\beta}{2} \, (z_\beta)^2 \biggr] \Biggr] \, h, \label{generalized x-equation} \\
    \yrd & = \Biggl[ \biggl[ (x)^2 + \sum_{\beta} \dfrac{1+w_\beta}{2} \, (z_\beta)^2 \biggr] \yr - c x \yi \Biggr] \, h, \label{generalized yr-equation} \\[0.45ex]
    \yid & = \Biggl[ \biggl[ (x)^2 + \sum_{\beta} \dfrac{1+w_\beta}{2} \, (z_\beta)^2 \biggr] \yi + c x \yr \Biggr] \, h, \label{generalized yi-equation} \\[0.45ex]
    \dot{z}_\alpha & = \Biggl[ - \dfrac{1+w_\alpha}{2} + \biggl[ (x)^2 + \sum_{\beta} \dfrac{1+w_\beta}{2} \, (z_\beta)^2 \biggr] \Biggr] \, z_\alpha h, \label{generalized z-equation} \\
    \dot{h} & = - \biggl[ (x)^2 + \sum_{\beta} \dfrac{1+w_\beta}{2} \, (z_\beta)^2 \biggr] h^2, \label{generalized h-equation}
\end{align}
\end{subequations}
subject to the constraint
\begin{equation}
    (x)^2 + 2^{2p} (\yr)^{2p} \, \Bigl( \dfrac{h^2}{l^2} \Bigr)^{p-1} + (z)^2 = 1, \label{generalized sphere condition}
\end{equation}
and with the additional bound
\begin{equation} \label{generalized (yr,yi)-bounds}
    (\yr)^2 + (\yi)^2 = \dfrac{1}{2} \dfrac{l^2}{h^2},
\end{equation}
One can see that eqs.~(\ref{x-equation}, \ref{yr-equation}, \ref{yi-equation}, \ref{z-equation}, \ref{h-equation}, \ref{sphere condition}) are recovered for $p=1$.
Adapting the logic of the main text to the generalized eqs.~(\ref{generalized x-equation}, \ref{generalized yr-equation}, \ref{generalized yi-equation}, \ref{generalized z-equation}, \ref{generalized h-equation}, \ref{generalized sphere condition}) is straightforward, as we show below.

\begin{lemma} \label{lemma: generalized abx upper bound}
    One has
    \begin{equation} \label{eq.: generalized abx upper bound}
        \ab x(t) \ab \leq \ab x_\In \ab + \dfrac{p}{2^{p - \frac{3}{2}}} \, c l \, \Bigl( \dfrac{h_\In}{l} \Bigr)^{\! 1 - \frac{1}{p}} (t-t_\In).
    \end{equation}
\end{lemma}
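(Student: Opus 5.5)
The plan is to repeat the argument of lemma~\ref{lemma: abx upper bound}. The new ingredient is that the forcing term in eq.~(\ref{generalized x-equation}) now carries powers of $\yr$ and of $h/l$. These must be controlled using the modified constraint in eq.~(\ref{generalized sphere condition}) together with the monotonicity of $h$.

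First, from eq.~(\ref{generalized x-equation}) I will write the evolution of $\ab x \ab$ as
\begin{equation*}
\dfrac{\de}{\de t} \ab x \ab = \Bigl[ - \ab x \ab (1-\xi) + 2^{p+1} p c \, \mathrm{sgn}(x) (\yr)^{2p-1} \yi \Bigl(\dfrac{h^2}{l^2}\Bigr)^{p-1} \Bigr] h .
\end{equation*}
Eq.~(\ref{generalized h-equation}) combined with eq.~(\ref{generalized sphere condition}) still gives $\xi \leq 1$. Hence the linear term is non-positive and can be dropped, leaving
\begin{equation*}
\dfrac{\de}{\de t}\ab x \ab \leq 2^{p+1} p c \, \ab \yr \ab^{2p-1} \, \ab \yi \ab h \, (h/l)^{2(p-1)} .
\end{equation*}

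Second, I will bound each factor separately.
\begin{itemize}
\item Since $(x)^2$ and $(z)^2$ are non-negative, eq.~(\ref{generalized sphere condition}) implies $2^{2p} (\yr)^{2p} (h/l)^{2(p-1)} \leq 1$. This gives $\ab \yr \ab \leq \tfrac12 (l/h)^{(p-1)/p}$, and therefore $\ab \yr \ab^{2p-1} \leq 2^{-(2p-1)} (l/h)^{(p-1)(2p-1)/p}$.
\item Eq.~(\ref{generalized (yr,yi)-bounds}) gives $\ab \yi \ab h \leq l/\sqrt{2}$, exactly as in the $p=1$ case.
\end{itemize}
Multiplying these bounds, the powers of $h/l$ combine into the exponent $2(p-1) - (p-1)(2p-1)/p = (p-1)/p = 1 - 1/p$. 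The numerical prefactor combines into $2^{p+1}\cdot 2^{-(2p-1)}\cdot 2^{-1/2} = 2^{3/2-p}$. The result is
\begin{equation*}
\dfrac{\de}{\de t}\ab x \ab \leq \dfrac{p}{2^{p-\frac32}} \, c l \, \Bigl(\dfrac{h}{l}\Bigr)^{1-\frac1p}.
\end{equation*}

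Third, the exponent $1-1/p$ is positive for $p>1$, and $h$ is non-increasing by eq.~(\ref{generalized h-equation}). I can therefore replace $h$ by $h_\In$ on the right-hand side, which yields a constant bound. Integrating from $t_\In$ to $t$ then gives eq.~(\ref{eq.: generalized abx upper bound}).

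The step I expect to be most delicate is the exponent bookkeeping in the second step. I must check that the power of $h/l$ coming from $\ab\yr\ab^{2p-1}$ does not exceed the explicit factor $(h/l)^{2(p-1)}$, so that the net exponent is positive. Positivity is exactly what allows the monotonic bound $h\leq h_\In$; with a negative exponent one would need a lower bound on $h$ instead. The exponent $(p-1)/p>0$ confirms that the approach works. One should also note that $\ab x\ab$ may fail to be differentiable where $x=0$; this is handled as in lemma~\ref{lemma: abx upper bound}, for instance by working with $\sqrt{(x)^2+\delta}$ and letting $\delta\to 0$.
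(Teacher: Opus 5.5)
Your proposal is the paper's proof step for step: drop $-\ab x \ab (1-\xi) h$ using $\xi \leq 1$, bound $\ab \yr \ab^{2p-1} (h^2/l^2)^{p-1} \leq 2^{1-2p} (h/l)^{1-1/p}$ via eq.~(\ref{generalized sphere condition}), use $\ab \yi \ab h \leq l/\sqrt{2}$, replace $h$ by $h_\In$ since $1-1/p>0$, and integrate; your exponent and prefactor bookkeeping is correct relative to eq.~(\ref{generalized x-equation}). Be aware, however, that you and the paper both inherit a slip in that equation, which makes the stated constant wrong for $p>1$: since $\partial_\phi \yr = -\yi/(2\f)$ at fixed $H$, differentiating the (correct) expression $\kappa_d^2 V_p/[(d-1)(d-2)H^2] = 2^{2p-1} (\yr)^{2p} (h^2/l^2)^{p-1}$ gives $\f \kappa_d^2 V_p'/[(d-1)(d-2)H^2] = -2^{2p-1} p \, (\yr)^{2p-1} \yi \, (h^2/l^2)^{p-1}$ rather than the stated $-2^{p} p \, (\yr)^{2p-1} \yi \, (h^2/l^2)^{p-1}$, so through $-\kappa_d V_p'/(\sqrt{d-1}\sqrt{d-2}\, H) = -2 c h \, \f \kappa_d^2 V_p'/[(d-1)(d-2)H^2]$ the forcing term in $\dot{x}$ carries $2^{2p} p c$ instead of $2^{p+1} p c$ (they agree only at $p=1$), the identical argument then yields the prefactor $\sqrt{2}\, p$ in place of $p/2^{p-\frac{3}{2}}$, and indeed for $p=2$ a field released from rest at $\phi_\In = \pi \f/2$ with no fluid has an initial $\ab \dot{x} \ab$ equal to $\sqrt{2}$ times the slope allowed by eq.~(\ref{eq.: generalized abx upper bound}).
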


\begin{proof}
    In view of eq.~(\ref{generalized x-equation}), we may write
    \begin{align*}
        \dfrac{\de}{\de t} \, \ab x \ab = \biggl[ - \ab x \ab (1 - \xi) + \mathrm{sgn}(x) \, 2^{p+1} p c \, (\yr)^{2p-1} \yi \, \Bigl( \dfrac{h^2}{l^2} \Bigr)^{p-1} \biggr] \, h.
    \end{align*}
    Then, because $\xi \leq 1$, we may write $\smash{- \ab x \ab (1- \xi) \leq 0}$.
    In view of eq.~(\ref{generalized sphere condition}), we may observe the bound
    \begin{align*}
        \ab \yr \ab^{2p - 1} \Bigl( \dfrac{h^2}{l^2} \Bigr)^{p-1} \leq \dfrac{1}{2^{2p-1}} \Bigl( \dfrac{h}{l} \Bigr)^{\! 1 - \frac{1}{p}}.
    \end{align*}
    Furthermore, in view of eq.~(\ref{generalized (yr,yi)-bounds}), we can write $\smash{\ab \yi \ab f \leq l/\sqrt{2}}$.
    Hence, we may write the differential inequality
    \begin{align*}
        \dfrac{\de}{\de t} \, \ab x \ab \leq \dfrac{p}{2^{p - \frac{3}{2}}} \, c l \, \Bigl( \dfrac{h}{l} \Bigr)^{\! 1 - \frac{1}{p}} \leq \dfrac{p}{2^{p - \frac{3}{2}}} \, c l \, \Bigl( \dfrac{h_\In}{l} \Bigr)^{\! 1 - \frac{1}{p}},
    \end{align*}
    where we also took advantage of the fact that $f$ is non-increasing.
    Then, the statement in eq.~(\ref{eq.: generalized abx upper bound}) follows immediately.
\end{proof}

\begin{lemma} \label{lemma: generalized time-interval lower bound}
    The duration of the time interval $[t_\In, t_\fin]$ is bounded from below as
    \begin{equation} \label{eq.: generalized time-interval lower bound}
        \Delta t_{\In \fin} \geq \dfrac{2^{p-\frac{5}{2}}}{p \sqrt{\xi_\fin}} \dfrac{1}{cl} \Bigl( \dfrac{l}{h_\In} \Bigr)^{\! 1 - \frac{1}{p}} \, (\xi_\fin - \xi_\In).
    \end{equation}
\end{lemma}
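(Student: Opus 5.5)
The plan is to repeat the two-step argument of lemma \ref{lemma: time-interval lower bound}, with lemma \ref{lemma: abx upper bound} replaced by lemma \ref{lemma: generalized abx upper bound}. Write $K = p \, 2^{-(p-3/2)} \, c l \, (h_\In/l)^{1-1/p}$ for the slope in eq.~(\ref{eq.: generalized abx upper bound}), so that $\ab x(t) \ab - \ab x_\In \ab \leq K (t-t_\In)$ for all $t \in [t_\In, t_\fin]$.

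First I obtain the lower bound on the growth of $(x)^2$. The variable $z$ obeys eq.~(\ref{generalized z-equation}), which has the same form as in the $p=1$ case. Since $\xi \leq \xi_\fin \leq (1+w)/2$ on the interval, $z$ is non-increasing there. Together with $\xi = (x)^2 + [(1+w)/2](z)^2$, this gives
\begin{equation*}
(x)^2 - (x_\In)^2 = \xi - \xi_\In - \tfrac{1+w}{2}\bigl[(z)^2 - (z_\In)^2\bigr] \geq \xi - \xi_\In .
\end{equation*}
Only the expression for $\xi$ is used here, not the modified sphere condition, so nothing changes for general $p$.

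Second I obtain the upper bound. I factor $(x)^2 - (x_\In)^2 = [\ab x \ab + \ab x_\In \ab][\ab x \ab - \ab x_\In \ab]$. Each of $\ab x \ab$ and $\ab x_\In \ab$ is at most $\sqrt{\xi_\fin}$, because $(x)^2 \leq \xi \leq \xi_\fin$ on the interval. Hence $(x)^2 - (x_\In)^2 \leq 2\sqrt{\xi_\fin}\, K (t-t_\In)$.

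Finally I evaluate both bounds at $t = t_\fin$, where $\xi = \xi_\fin$. This yields $\xi_\fin - \xi_\In \leq 2\sqrt{\xi_\fin}\,K\,\Delta t_{\In\fin}$, i.e.
\begin{equation*}
\Delta t_{\In \fin} \geq \frac{2^{p-3/2}}{2 p \sqrt{\xi_\fin}} \frac{1}{cl} \Bigl(\frac{l}{h_\In}\Bigr)^{1-1/p} (\xi_\fin - \xi_\In),
\end{equation*}
and $2^{p-3/2}/2 = 2^{p-5/2}$ gives exactly eq.~(\ref{eq.: generalized time-interval lower bound}).

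The only step that needs care is the use of lemma \ref{lemma: generalized abx upper bound}. Its slope was made constant by replacing $h$ with $h_\In$, which uses the fact that $h$ is non-increasing and that $1 - 1/p > 0$ for $p > 1$. Since the lemma is already stated, I take it as given, and the remaining work is just the exponent bookkeeping above.
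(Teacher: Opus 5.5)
Your proposal is correct and follows the paper's own proof essentially verbatim. Monotonicity of $z$ (from $\xi \leq \xi_\fin \leq (1+w)/2$) gives $(x)^2 - (x_\In)^2 \geq \xi - \xi_\In$, and the factorization $[\ab x \ab + \ab x_\In \ab][\ab x \ab - \ab x_\In \ab]$ together with lemma~\ref{lemma: generalized abx upper bound} and $(x)^2 \leq \xi_\fin$ gives $2\sqrt{\xi_\fin}\,K\,(t-t_\In)$; evaluating at $t_\fin$ yields exactly the constant $2^{p-5/2}/p$, as in the paper.
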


\begin{proof}
    Let us split the proof into two main steps.
    \begin{enumerate}[label=\roman*.]
        \item In view of eq.~(\ref{generalized z-equation}), one may write $\smash{\dot{z} = [-(1+w)/2 + \xi] \, z h \leq [-(1+w)/2 + \xi_\fin] \, z h \leq 0}$.
        Hence, because the function $z$ is non-increasing, and, because 
        \begin{align*}
            [x(t)]^2 - (x_\In)^2 = \xi(t) - \xi_\In - [(1+w)/2] \, \bigl[ [z(t)]^2 - (z_\In)^2 \bigr],
        \end{align*}
        one may write
        \begin{align*}
            [x(t)]^2 - (x_\In)^2 \geq \xi(t) - \xi_\In.
        \end{align*}
        \item In view of eq.~(\ref{eq.: generalized abx upper bound}), and because $\smash{(x)^2 \leq \xi_\fin - [(1+w)/2] \, (z)^2 \leq \xi_\fin}$, we can write
        \begin{align*}
            [x(t)]^2 - (x_\In)^2 \leq [\ab x(t) \ab + \ab x_\In \ab] \, \dfrac{p}{2^{p - \frac{3}{2}}} \, c l \, \Bigl( \dfrac{h_\In}{l} \Bigr)^{\! 1 - \frac{1}{p}} (t-t_\In) \leq \dfrac{p \sqrt{\xi_\fin}}{2^{p - \frac{5}{2}}} \, c l \, \Bigl( \dfrac{h_\In}{l} \Bigr)^{\! 1 - \frac{1}{p}} (t-t_\In).
        \end{align*}
    \end{enumerate}
    By combining the two bounds above, it is then immediate to conclude with the bound in eq.~(\ref{eq.: generalized time-interval lower bound}).
\end{proof}

One may systematically proceed with a more careful treatment of the differential inequalities, keeping into account all the information on the function $f$ in treating with the bounds on $\ab x \ab$ and $z$, as done for the case $p=1$ in app. \ref{app.: refined bounds}.

\addcontentsline{toc}{section}{References}

\bibliographystyle{JHEP}
\bibliography{refs}

\end{document}